\documentclass[10pt]{article}

\setlength{\oddsidemargin}{0in}
\setlength{\textwidth}{\paperwidth}
\addtolength{\textwidth}{-2in}
\setlength{\topmargin}{-.5in}
\setlength{\textheight}{8.75in}
\usepackage{amsmath,amssymb,amsbsy,amsthm,color,epsfig,mathrsfs}
\usepackage{amsbsy,floatrow}
\usepackage{caption}
\usepackage{pifont}

\usepackage{hyperref}
\usepackage{bookmark}

%\usepackage{showkeys}

%\renewcommand{\thesection}{{\arabic{section}}}
%\renewcommand{\thesection}{}
%\renewcommand{\thesubsection}{}

%\newcommand{\tm psection}[1]{}
%\let\tm psection=\section
%\renewcommand{\section}[1]{\tm psection{\centerline{\hspace*{-2em}\arabic{section}. #1}}}

%\newcommand{\tm psubsection}[1]{}
%\let\tm psubsection=\subsection
%\renewcommand{\subsection}[1]{\tm psubsection{\itshape\hspace*{-1em}\arabic{section}.\arabic{subsection} #1}}

% Permet de numŽroter les Žquations par section
%\renewcommand{\theequation}{{\arabic{section}.\arabic{equation}}}

\newtheorem{theorem}{Theorem}

\newtheorem{condition}[theorem]{Condition}

\newtheorem{definition}[theorem]{Definition}

\newtheorem{lemma}[theorem]{Lemma}

\newtheorem{proposition}[theorem]{Proposition}

% My list environment
\newcounter{spslist}

%%%%%%%%%%%%%%%%%%%%
\newcounter{geqncount}
    {\refstepcounter{equation}%
     \setcounter{geqncount}{\value{equation}}%
     \setcounter{equation}{0}%
  }%
    {\setcounter{equation}{\value{geqncount}}}
%%%%%%%%%%%%%%%%%%%%

\newcommand{\Hh}{{\mathcal H}}
\newcommand{\Ss}{{\mathcal S}}

\newcommand{\Cf}{{\mathfrak C}}
\newcommand{\Df}{{\mathfrak D}}
\newcommand{\Gf}{{\mathfrak G}}
\newcommand{\Af}{{\mathfrak A}}
\newcommand{\Bf}{{\mathfrak B}}
\newcommand{\Mf}{{\mathfrak M}}
\newcommand{\CC}{\mathbb{C}}
\newcommand{\ZZ}{\mathbb{Z}}
\newcommand{\RR}{\mathbb{R}}
\newcommand{\TT}{\mathbb{T}}
\newcommand{\V}{{\mathcal V}}
\newcommand{\E}{{\mathcal E}}
\newcommand{\dE}{\vec{\mathcal E}}
\newcommand{\dir}{\vec}

\newcommand{\G}{\Gamma}
\newcommand{\Go}{\mathring{\Gamma}}
\newcommand{\Vo}{\mathring{\mathcal V}}
\newcommand{\Eo}{\mathring{\mathcal E}}
\newcommand{\Ao}{\mathring{A}}
\newcommand{\Wo}{\mathring{W}}
\newcommand{\Afo}{\mathring{\mathfrak A}}

\newcommand{\cc}{\check{c}}
\newcommand{\cs}{\check{s}}
\newcommand{\so}{\mathring{s}}
\newcommand{\co}{\mathring{c}}
\newcommand{\cto}{\tilde{\mathring{c}}}

\newcommand{\op}{_1^+}
\newcommand{\tp}{_2^+}
\newcommand{\om}{_1^-}
\newcommand{\tm}{_2^-}

\newcommand{\Hloc}{H^{2R}_\text{loc}}

\newcommand{\itilde}{\tilde{\imath}}

\newcommand{\dom}{\mathcal{D}}

\newcommand{\tq}{{\tilde q}}

\newcommand{\half}{{\textstyle\frac{1}{2}}}

\newcommand{\Dfcn}{\Omega}

\begin{document}

\bibliographystyle{plain}

\begin{center}
{\bf \Large  Reducible Fermi Surfaces for \\\vspace{0.7ex} Non-symmetric Bilayer Quantum-Graph Operators}
\end{center}

\vspace{0.2ex}

\begin{center}
{\scshape \large Stephen P\hspace{-2pt}. Shipman}\footnote{shipman@lsu.edu; 303 Lockett Hall, Department of Mathematics, Louisiana State University, Baton Rouge, LA 70803} \\
\vspace{1ex}
{\itshape Department of Mathematics, Louisiana State University\\
Baton Rouge, LA 70803, USA}
\end{center}

\vspace{3ex}
\centerline{\parbox{0.9\textwidth}{
{\bf Abstract.}\
This work constructs a class of non-symmetric periodic Schr\"odinger operators on metric graphs (quantum graphs) whose Fermi, or Floquet, surface is reducible.
The Floquet surface at an energy level is an algebraic set that describes all complex wave vectors admissible by the periodic operator at the given energy.
The graphs in this study are obtained by coupling two identical copies of a periodic quantum graph by edges to form a bilayer graph. 
Reducibility of the Floquet surface for all energies ensues when the coupling edges have potentials belonging to the same asymmetry class.
The notion of asymmetry class is defined in this article through the introduction of an entire spectral A-function $a(\lambda)$ associated with a potential---two potentials belong to the same asymmetry class if their A-functions are identical.
Symmetric potentials correspond to $a(\lambda)\equiv0$.
If the potentials of the connecting edges belong to different asymmetry classes, then typically the Floquet surface is not reducible.
An exception occurs when two copies of certain bipartite graphs are coupled; the Floquet surface in this case is always reducible.  This includes AA-stacked bilayer graphene.
}}

\vspace{3ex}
\noindent
\begin{mbox}
{\bf Key words:}  quantum graph, graph operator, periodic operator, bound state, embedded eigenvalue, reducible Fermi surface, local perturbation, Floquet transform, bilayer graphene
\end{mbox}

\vspace{3ex}

\noindent
\begin{mbox}
{\bf MSC:}  47A75, 47B25, 39A70, 39A14, 47B39, 47B40, 39A12
\end{mbox}
\vspace{3ex}

\hrule

\section{Introduction} %%%%%%%%%%%%%%%%%%%%%

The Fermi surface for a periodic Schr\"odinger operator $-\nabla^2+q(x)$ ($x\in\RR^n$) is the analytic set of complex wavevectors $(k_1,\dots,k_n)$ for which the operator admits a (non-square-integrable) state at a fixed energy~$\lambda$.  For periodic Schr\"odinger operators on metric graphs, known as quantum graphs, the Fermi surface is an algebraic set in the variables $(z_1,\dots,z_n)=(e^{ik_1},\dots,e^{ik_n})$---that is, the zero set of a polynomial in several variables.  The reducibility of the Fermi surface into the union of two algebraic sets is important because it is intimately related to the existence of embedded eigenvalues induced by a local perturbation of the operator.  It is proved by Kuchment and Vainberg~\cite{KuchmentVainberg2006} that reducibility is required for a local perturbation to engender a square-integrable eigenfunction with unbounded support at an energy that is embedded in the continuous spectrum.

The type of Fermi surface considered in this article is the zero set of a single Laurent polynomial in $(z_1,\dots,z_n)$, and its reducibility is equivalent to the nontrivial factorability of this polynomial into a product of two Laurent polynomials.
That a polynomial in several variables generically cannot be factored nontrivially indicates that a quantum graph must possess special features in order that a local defect be able to support an embedded eigenvalue with eigenfunction having unbounded support.  The typical feature is symmetry.  For a class of operators possessing reflectional symmetry, it is proved in~\cite[\S3]{Shipman2014} that such eigenfunctions are possible due to the decomposition of the operator on even and odd states, which, in turn, effects a canonical reduction of the Fermi surface.
The present work addresses the reducibility of the Fermi surface for a certain class of quantum graphs that are not decomposable by symmetry.  Although the underlying metric graphs are reflectionally symmetric, the Schr\"odinger operators on them are not.  The construction of embedded eigenvalues is not investigated~here.

Reducibility or irreducibility of the Fermi surface has been established for very few periodic operators. 
It is irreducible for all but finitely many energies for the discrete two-dimensional Laplacian plus a periodic potential~\cite{GiesekerKnorrerTrubowitz1993} and for the continuous Laplacian plus a separable potential in two and three dimensions~\cite{BattigKnorrerTrubowitz1991,KuchmentVainberg2000}.  And as mentioned already, it is reducible for quantum graphs that admit a reflectional symmetry.  This work adds to that list a large class of non-reflectionally-symmetric operators with reducible Fermi surface.

The fundamental object of study is the ``bilayer graph" obtained by coupling two identical copies of a given periodic quantum graph by edges connecting corresponding vertices, as shown in Fig.~\ref{fig:BilayerGraphene}.  The potential $q_e(x)$ of the  Schr\"odinger operator $-d^2/dx^2 + q_e(x)$ on a connecting edge $e$ imparts asymmetry to the graph.  An entire spectral function $a(\lambda)$, called the A-function, associated to any potential $q(x)$, is introduced (equation~\ref{Afunction}, \S\ref{sec:spectralfunctions}).  Its significance is that it decides whether two different potentials are compatible with regard to their asymmetry.  One of the main results of this work, Theorem~\ref{thm:asymmetric1} in section~\ref{sec:asymmetric1}, is that {\em reducibility of the Floquet surface occurs when the coupling edges have potentials belonging to the same asymmetry class,} two potentials being in the same class if their spectral A-functions $a(\lambda)$ are identical.  Symmetric potentials correspond to $a(\lambda)\equiv0$.
When two copies of a bipartite graph with two vertices per period are joined, the resulting bilayer graph turns out to have reducible Fermi surface regardless of the two asymmetry classes of the connecting edges, as reported in Theorem~\ref{thm:graphene} in section~\ref{sec:asymmetric2}.  This includes the form of bilayer graphene in which the two layers are exactly aligned; this is commonly known as AA-stacked graphene, as distinguished from AB-stacked graphene.

To construct a bilayer graph, one starts with a given periodic quantum graph $(\Go,\Ao)$, where $\Go$ is a metric graph and $\Ao$ is a Schr\"odinger operator defined on it (section~\ref{sec:bilayer}).  The periodicity means that there is a faithful $\ZZ^n$ action on $\Go$ that commutes with $\Ao$.  The action of $g=(g_1,\dots,g_n)\in\ZZ^n$ is viewed as a shift along the vector $g_1v_1+\dots+g_nv_n$, with $\{v_j\}_{j=1}^n$ being independent generating period vectors.  This quantum graph is declared to be a single layer.  A~bilayer periodic quantum graph $(\G,A)$ is then built by taking two copies of the single layer and coupling them by edges that connect corresponding vertices, as in Fig~\ref{fig:BilayerGraphene}.  Although the bare bilayer metric graph possesses reflectional symmetry about the centers of the connecting edges, the potentials associated to the connecting edges are allowed to be asymmetric.   

\begin{figure}[ht]
\centerline{
\scalebox{0.25}{\includegraphics{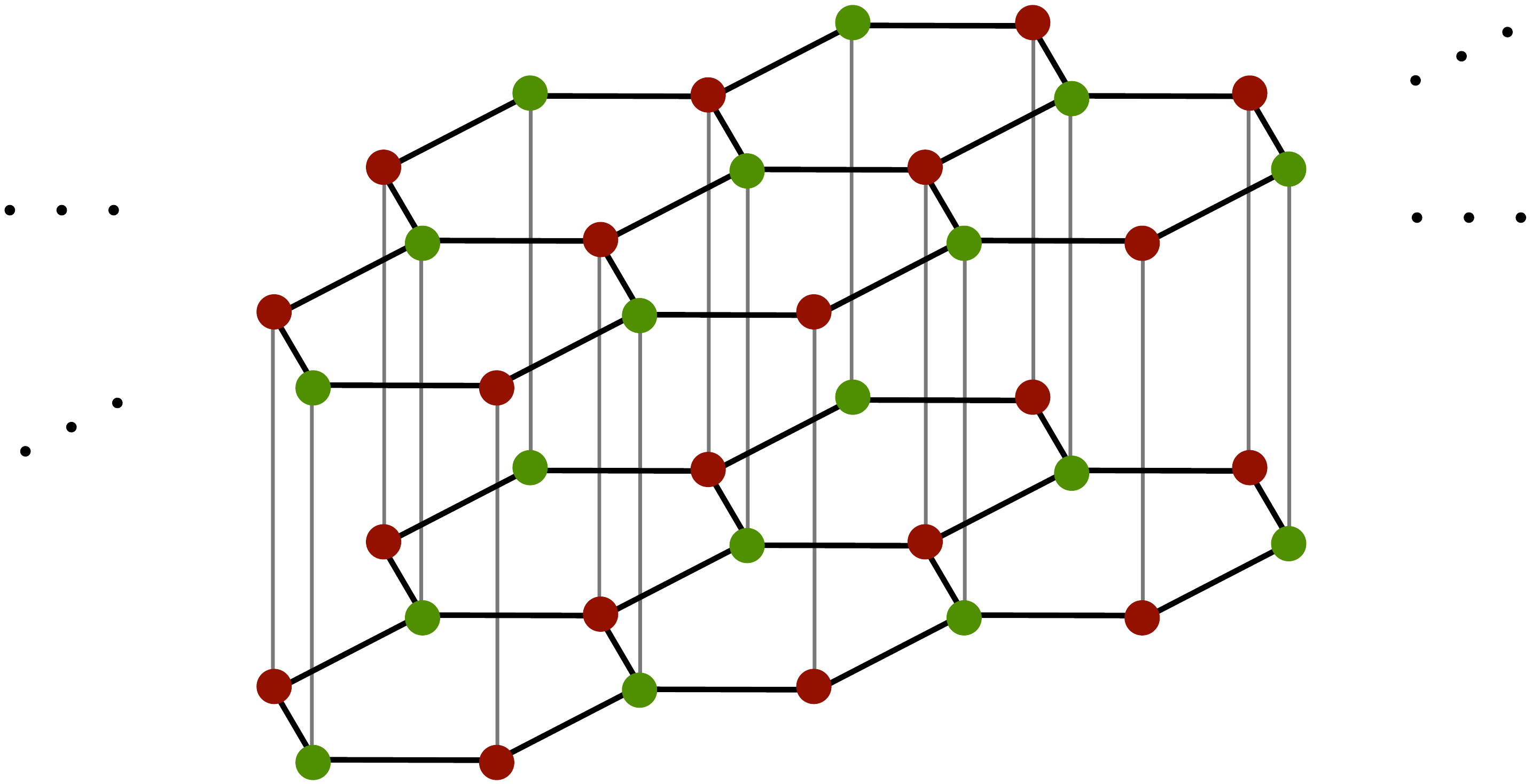}}
}
\caption{\small An example of a periodic bilayer quantum graph based on the single-layer hexagonal graphene structure.}
\label{fig:BilayerGraphene}
\end{figure}

Floquet modes $u(x)$ of $(\G,A)$ are simultaneous eigenfunctions of $A$ and the $\ZZ^n$ action,
\begin{equation}
  Au = \lambda u\,,
  \quad
  u(g\,\cdot) = z^g u
\end{equation}
for $g=(g_1,\dots,g_n)\in\ZZ^n$.  Here, $\lambda\in\CC$ is called the {\em energy} and $z=(z_1,\dots,z_n)\in(\CC^*)^n$ is called the vector of {\em Floquet multipliers} associated with the $n$ fundamental shifts, and $z^g=z_1^{g_1} \cdots z_n^{g_n}$.  The {\em wavevector} $k=(k_1,\dots,k_n)\in\CC^n$ is such that
$z=e^{ik}$, that is $z_\ell=e^{ik_\ell}$ for $\ell=1,\dots,n$, and its components are defined up to the addition of an integer multiple of $2\pi i$.  Floquet modes are never square integrable.

The {\em dispersion relation} $D(\lambda,z)=0$ describes the pairs $(\lambda,z)$ for which $(\G,A)$ admits a Floquet mode.  The function $D(\lambda,z)$ is called the {\em dispersion function}; it is analytic in $\lambda$ (except on a discrete set~(\ref{sigmaD})) and a Laurent polynomial in $z$.  As a subset of $\CC\times(\CC^*)^n$, the dispersion relation is known as the Bloch variety or dispersion surface of $(\G,A)$.  The present study concerns this relation for fixed energy $\lambda$ considered as an algebraic curve in $(\CC^*)^n$.  This set $\Phi_\lambda=\Phi_{A,\lambda}$ is called the {\em Floquet surface} or Floquet variety for $A$ at energy~$\lambda$,
\begin{equation}\label{Floquetsurface}
  \Phi_\lambda := \left\{ z\in(\CC^*)^n : D(\lambda,z)=0 \right\}.
\end{equation}
When considered as a function of the wavevector $k$, it is the Fermi surface.

The reducibility of $\Phi_\lambda$ is equivalent to the factorization of $D(\lambda,z)$ as a Laurent polynomial in the several variables $z=(z_1,\dots,z_n)$ into two distinct nontrivial Laurent polynomial factors, $D(\lambda,z)=D_1(\lambda,z)D_2(\lambda,z)$, in the sense that each factor has a nonempty zero set in $(\CC^*)^n$.  There is no requirement on the nature of the $\lambda$-dependence of the coefficients of the factors.  This reveals a deep connection between factorability of polynomials in several variables and embedded eigenvalues.  The connection transpires through the Floquet transform (the Fourier transform with respect to the $\ZZ^n$ symmetry group) as demonstrated in~\cite{KuchmentVainberg2006} in the proof of Theorem~6 (p.~679).

\smallskip
This article establishes the following hierarchy of symmetry properties of periodic bilayer quantum graphs and the corresponding reducibility properties of the Floquet surface.

\begin{enumerate}

\item\label{item:symmetric} {\slshape Symmetric connecting edges: vanishing A-function $a(\lambda)=0$.} Reducibility of the Floquet surface $\Phi_\lambda$ is obtained by a canonical decomposition of the bilayer graph operator onto the even and odd states with respect to the reflectional symmetry about the connecting edges.
This was shown in~\cite{Shipman2014} and is reviewed in section~\ref{sec:symmetric}.

\item {\slshape Asymmetric connecting edges with identical A-functions $a(\lambda)$.}  Reducibility of the Floquet surface $\Phi_\lambda$ is obtained by an energy-dependent decomposition of the operator reduced to the generalized (not square integrable) eigenspace of Floquet modes for each given energy $\lambda$.
The proof is in~section~\ref{sec:asymmetric1}.

\item\label{item:graphene} {\slshape Asymmetric connecting edges with different A-functions $a(\lambda)$:
bipartite layers with two vertices per period.}  Reducibility of the Floquet surface $\Phi_\lambda$ is obtained due to a special reduction of the dispersion function $D(\lambda,z)$ to a function of a single variable that is a polynomial in $(z_1,\dots,z_n)$.  The analysis is carried out in section~\ref{sec:asymmetric2}.

\item\label{item:general} {\slshape Asymmetric connecting edges with different A-functions $a(\lambda)$: general case.}  Irreducibility of the Floquet surface $\Phi_\lambda$ appears to be generic.
It is proved for a simple example in section~\ref{sec:asymmetric3}.

\end{enumerate}

When the connecting edges are symmetric (case~\ref{item:symmetric}), the components of the bilayer quantum graph $(\G,A)$ acting in the even and odd spaces of states can be realized as operators on ``decorated" copies of the single-layer graph $\Go$.  These decorations are obtained by attaching a dangling edge to each vertex of the single layer and imposing a self-adjoint condition on the terminal, or free, vertex of that edge.
 It turns out that, when the connecting edges are not symmetric, such a realization is not possible (Theorem~\ref{thm:realizability}, section~\ref{sec:realizability}) due to branch points on the Riemann surface of the Dirichlet-to-Neumann map for the connecting edges.

\smallskip
Section~\ref{sec:edge} introduces the spectral A-function and defines the notion of an asymmetry class of connecting edges.
Section~\ref{sec:bilayer} lays down the precise construction of periodic bilayer quantum graphs.
Section~\ref{sec:symmetric} reviews the case of symmetric coupling potentials, and Section~\ref{sec:asymmetric1} proves the main Theorem~\ref{thm:asymmetric1} on reducibility for coupling potentials with compatible asymmetries.  Section~\ref{sec:asymmetric2} presents the special case of bipartite layers, and Section~\ref{sec:asymmetric3} addresses generic periodic bilayer quantum graphs.

\section{Analysis of a single edge and asymmetry}\label{sec:edge} %%%%%%%%%%%%%%%%%%%%%

This section introduces the spectral asymmetry function, or A-function, associated to a potential $q$ on a finite interval or an edge of a graph and describes some of its properties.

Let the edge $e$ connecting the vertices $v$ and $w$ be parameterized by identifying it with the interval $[0,1]$, directed from $v$ to $w$.  This allows one to define a Schr\"odinger operator $-d^2/dx^2+q(x)$ on $e$, with $q\in L^2([0,1],\RR)\cong L^2(e,\RR)$.  
When the same edge $e$ is considered as directed from $w$ to $v$, it is identified with $[0,1]$ by replacing $x$ with $1\!-\!x$, and the Schr\"odinger operator now takes the form $-d^2/dx^2+\tq(x)$, in which $\tq(x)$ denotes the reflection of~$q$ about the center of the interval $[0,1]$,
\begin{equation}
  \tq(x) = q(1-x)\,.
\end{equation}
In order to avoid burdensome technicalities, it is sometimes convenient to identify a function $u:e\to\CC$ with the corresponding function of the parameter $x$ by writing $u:[0,1]\to\CC$.  Let a function $u$ in the Sobolev space $H^2(0,1)\cong H^2(e)$ of square-integrable functions with square-integral derivatives satisfy the eigenvalue condition
\begin{equation}
  (-d^2/dx^2+q(x)-\lambda)u = 0
\end{equation}
on $(0,1)$.  Since the vertices $v$ and $w$ are identified with the points $0$ and $1$, one can write
\begin{equation}\label{vertexdata}
\begin{split}
  u(v)=u(0), &\quad u(w)=u(1) \\
  u'(v)=\frac{du}{dx}(0), &\quad u'(w)=-\frac{du}{dx}(1).
\end{split}
\end{equation}
The reason for the minus sign is that it is appropriate to treat the two vertices symmetrically, taking the derivative at each vertex in the direction pointing from the vertex into the edge.

\subsection{Spectral functions for an edge: The A-function}\label{sec:spectralfunctions}

Given an edge $e$ directed from $v$ to $w$ with parameter $x\in[0,1]$,
let $c_q(x,\lambda)$ and $s_q(x,\lambda)$ be a fundamental pair of solutions to $(-d^2/dx^2+q(x)-\lambda)u=0$ satisfying the initial conditions
\begin{equation}
\begin{split}
  c_q(0,\lambda) = 1, &\quad s_q(0,\lambda) = 0 \\
  c'_q(0,\lambda) = 0, &\quad s'_q(0,\lambda) = 1,
\end{split}
\end{equation}
in which the prime denotes the derivative with respect to the first argument $x$.
Define
\begin{equation}
\begin{split}
  c_{(v,w)}(\lambda) = c(\lambda) := c_q(1,\lambda), &\qquad s_{(v,w)}(\lambda) = s(\lambda) := s_q(1,\lambda), \\
  c'_{(v,w)}(\lambda) = c'(\lambda) := c'_q(1,\lambda), &\qquad s'_{(v,w)}(\lambda) = s'(\lambda) := s'_q(1,\lambda),
\end{split}
\end{equation}
in which the dependence on $q$ is suppressed.  When the edge $e$ is directed from $w$ to $v$, the corresponding quantities have $\tq$ in place of $q$; they will also be abbreviated by use of a tilde.  For example,
\begin{equation}
  c_{(w,v)}(\lambda) = \tilde c(\lambda) := c_{\tq}(1,\lambda).
\end{equation}
These four functions are entire and of exponential order $1/2$ and have their roots on the real line~\cite[\S1.1]{FreilingYurko2001}.
An important observation is that $s(\lambda)$ is a function of the undirected edge $\left\{ v,w \right\}$, whereas $c(\lambda)$ is a function of the directed edge $(v,w)$.

There are relations between these spectral functions for $q(x)$ and $\tilde q(x)$.
Let $u(x)$ satisfy $-u'' + q(x)u=\lambda u$; then
\begin{equation}\label{transferq1}
\renewcommand{\arraystretch}{1.2}
\left[\hspace{-5pt}
\begin{array}{cc}
  c(\lambda) & s(\lambda) \\ -c'(\lambda) & -s'(\lambda)
\end{array}
\hspace{-3pt}\right]
\renewcommand{\arraystretch}{1.2}
\left[\hspace{-5pt}
\begin{array}{c}
  u(0) \\ u'(0)
\end{array}
\hspace{-5pt}\right]
=
\renewcommand{\arraystretch}{1.2}
\left[\hspace{-5pt}
\begin{array}{c}
  u(1) \\ -u'(1)
\end{array}
\hspace{-5pt}\right].
\end{equation}
The function $\tilde u(x)=u(1-x)$ satisfies $-\tilde u''+\tilde q(x)\tilde u=\lambda \tilde u$; therefore
\begin{equation*}
\renewcommand{\arraystretch}{1.2}
\left[\hspace{-5pt}
\begin{array}{cc}
  \tilde c(\lambda) & \tilde s(\lambda) \\ -\tilde c'(\lambda) & -\tilde s'(\lambda)
\end{array}
\hspace{-3pt}\right]
\renewcommand{\arraystretch}{1.2}
\left[\hspace{-5pt}
\begin{array}{c}
  \tilde u(0) \\ \tilde u'(0)
\end{array}
\hspace{-5pt}\right]
=
\renewcommand{\arraystretch}{1.2}
\left[\hspace{-5pt}
\begin{array}{c}
  \tilde u(1) \\ -\tilde u'(1)
\end{array}
\hspace{-5pt}\right],
\end{equation*}
or, equivalently,
\begin{equation}\label{transferq2}
\renewcommand{\arraystretch}{1.2}
\left[\hspace{-5pt}
\begin{array}{cc}
  \tilde c(\lambda) & \tilde s(\lambda) \\ -\tilde c'(\lambda) & -\tilde s'(\lambda)
\end{array}
\hspace{-3pt}\right]
\renewcommand{\arraystretch}{1.2}
\left[\hspace{-5pt}
\begin{array}{c}
  u(1) \\ -u'(1)
\end{array}
\hspace{-5pt}\right]
=
\renewcommand{\arraystretch}{1.2}
\left[\hspace{-5pt}
\begin{array}{c}
  u(0) \\ u'(0)
\end{array}
\hspace{-5pt}\right].
\end{equation}
Thus the matrices in (\ref{transferq1}) and (\ref{transferq2}) are inverses of each other.  They also have determinant equal to $-1$.
This yields the relations $\tilde c=s'$, $s=\tilde s$, and $c'=\tilde c'$.
From these, one obtains the following relations.
The first two demonstrate that $s(\lambda)$ and $c'(\lambda)$ do not depend on the direction of the edge.  The notation with the subscript $\{v,w\}$ emphasizes that these quantities are determined by the undirected edge alone,
\begin{equation}\label{csrelations}
\begin{split}
  s_{\{v,w\}} :=\;\;& s_{(v,w)} = s = \tilde s = s_{(w,v)}\,, \\
  c'_{\{v,w\}} :=\;\;& c'_{(v,w)} = c' = \tilde c' = c'_{(w,v)}\,, \\
  & c_{(v,w)} = c = \tilde s' = s'_{(w,v)}\,, \\
  & s'_{(v,w)} = s' = \tilde c = c_{(w,v)}\,.
\end{split}
\end{equation}

Define the {\em transfer matrix} $T_q(\lambda)$ for the edge $e$ directed from $v$ to $w$, for the potential $q$ and spectral value~$\lambda$, to be the matrix that takes Cauchy data $(u(v),u'(v))$ at $v$ to Cauchy data $(u(w),u'(w))$ at $w$.
Define the {\em Dirichlet-to-Neumann}, or DtN, matrix $G_q(\lambda)$ as that which takes Dirichlet data $(u(v),u(w))$ to Neumann data $(u'(v),u'(w))$.  Using the identity $s'(\lambda)=\tilde c(\lambda)$, one obtains
\begin{align}
\label{Tq}
\underbrace{
\renewcommand{\arraystretch}{1.2}
\left[\hspace{-5pt}
\begin{array}{cc}
  c(\lambda) & s(\lambda) \\ -c'(\lambda) & -\tilde c(\lambda)
\end{array}
\hspace{-3pt}\right]
}_{T_q(\lambda)}
\renewcommand{\arraystretch}{1.2}
\left[\hspace{-5pt}
\begin{array}{c}
  u(v) \\ u'(v)
\end{array}
\hspace{-5pt}\right]
& =
\renewcommand{\arraystretch}{1.2}
\left[\hspace{-5pt}
\begin{array}{c}
  u(w) \\ u'(w)
\end{array}
\hspace{-5pt}\right]
 \\
 \label{Gq}
 \underbrace{
\renewcommand{\arraystretch}{1.2}
\frac{1}{s(\lambda)}
\left[\hspace{-5pt}
\begin{array}{cc}
  -c(\lambda) & 1 \\ 1 & -\tilde c(\lambda)
\end{array}
\hspace{-3pt}\right]
}_{G_q(\lambda)}
\renewcommand{\arraystretch}{1.2}
\left[\hspace{-5pt}
\begin{array}{c}
  u(v) \\ u(w)
\end{array}
\hspace{-5pt}\right]
& =
\renewcommand{\arraystretch}{1.2}
\left[\hspace{-5pt}
\begin{array}{c}
  u'(v) \\ u'(w)
\end{array}
\hspace{-5pt}\right].
\end{align}
$G_q(\lambda)$ is a meromorphic function with simple poles at the roots of $s(\lambda)$, which are the Dirichlet eigenvalues of $-d^2/dx^2+q(x)$ on~$e$~\cite[Lemma~1.1.1]{FreilingYurko2001}.  It is akin to the Weyl-Titchmarsh M-function for an interval~\cite[\S1.4.4]{FreilingYurko2001}.

The potential $q$ is uniquely decomposed into symmetric and anti-symmetric parts with respect to reflection about the center of $e$,
\begin{align}
  q(x) &= q_+(x) + q_-(x),  \\
  \tq(x) &= q_+(x) - q_-(x).
\end{align}
Define two entire spectral functions associated with the potential $q$,
\begin{align}
  a_q(\lambda) = a(\lambda) &= \half\big( c(\lambda) - \tilde c(\lambda) \big),\label{Afunction}
  \qquad \text{(spectral A-function)} \\
  b_q(\lambda) = b(\lambda) &= \half\big( c(\lambda) + \tilde c(\lambda) \big). \label{Bfunction}
\end{align}
The first of these shall be known as the {\em spectral asymmetry function}, or ``A-function" associated with $q(x)$.
The A-function can is half the trace of the transfer matrix,
\begin{equation}
  a_q(\lambda) \;=\; \half \,\mathrm{tr}\, T_q(\lambda),
\end{equation}
and because of the relations~(\ref{csrelations}), it can be written in other ways, such as
\begin{equation}\label{ausings}
  a(\lambda)=\half\big(\tilde s'(\lambda)-s'(\lambda)\big).
\end{equation}

This A-function and the function $b(\lambda)$ are identical to the functions $u_-(\lambda)$ and $u_+(\lambda)$ defined in~\cite{MarcenkoOstrovskii1975} (see p.~494 and Lemma~4.1), in which the authors give a characterization of the all spectra of Hill operators ($-d^2/dx^2+q(x)$ with periodic potential $q$ on $\RR$) as certain sequences of intervals on the real $\lambda$-line.  The A-function is also identical to $\delta(\lambda)$ defined in~\cite[p.~2]{Yurko2016}, and $b(\lambda)$ is equal to $\Delta(\lambda)$ there.  In~\cite{Yurko2016}, the authors are interested in symmetric potentials, for which $a(\lambda)$ vanishes identically.

If $q$ is considered to be a function defined on an edge $e=\{v,w\}$ rather than a function of $x\in[0,1]$, a direction must be specified in order to determine the sign of the A-function; precisely,
\begin{equation}
  a_{(v,w),q}(\lambda) \;=\; -a_{(w,v),q}(\lambda) \;=\; \half \big( c_{(v,w)}(\lambda)-c_{(w,v)}(\lambda) \big) \,.
\end{equation}

\begin{definition}
Two potentials $q_1$ and $q_2$ in $L^2([0,1],\RR)$ are said to be in the same {\em asymmetry class} if their associated A-functions  are identical, that is, $a_{q_1}(\lambda)=a_{q_2}(\lambda)$ for all $\lambda\in\CC$.  As potentials on an edge $e=\{v,w\}$, $q_1$ and $q_2$ are in the same asymmetry class if their A-functions associated with a given direction of the edge are identical, that is,
$a_{(v,w),q_1}(\lambda)=a_{(v,w),q_2}(\lambda)$.
Potentials in the same asymmetry class are said to have {\em compatible asymmetries}.
\end{definition}

That this is a good definition is manifest by the following theorem, which relies on a uniqueness theorem by G.\,Borg for an inverse spectral problem. 
The first part of the theorem implies that the symmetric potentials $q(x)=q(1-x)$ form a single asymmetry class associated with $a(\lambda)=0$.  Characterizing the asymmetry classes for nonzero $a(\lambda)$ is a difficult non-unique inverse problem that deserves an investigation of its own.

\begin{theorem}[Properties of the A-function]\label{thm:a}   %%%%%%%%%%% THEOREM %%%%%%%%%%%%%
The A-function for square-integrable potentials satisfies the following properties.
\begin{enumerate}
 \item
The potential $q(x)$ is symmetric if and only if $a(\lambda)$ vanishes identically, that is,
\begin{equation}\label{equivalence}
  q(x) = q(1-x)
  \quad\iff\quad
  a(\lambda) = 0\,.
\end{equation}
The first equality is in the sense of $L^2$ (almost every $x$), and the second means for all $\lambda\in\CC$.
\item
If $\lambda\in\RR$, then $a(\lambda)\in\RR$.
\item
For $i\in\{1,2\}$, let $q_i(x)\in L^2[0,1]$ have A-function $a_i(\lambda)$ and DtN matrix $G_i(\lambda)$; and let $\lambda$ not be a Dirichlet eigenvalue for $-d^2/dx^2 + q_i(x)$ ($i\in\{1,2\}$).  Then $a_1(\lambda)=a_2(\lambda)$ if and only if $G_1(\lambda)$ and $G_2(\lambda)$ commute with each other.
In this case, if $\lambda\in\RR$, then $G_1(\lambda)$ and $G_2(\lambda)$ are simultaneously diagonalizable.
\item The Dirichlet spectrum of $-d^2/dx^2+q(x)$ on an edge \,$e$\, together with the A-function of \,$q$\, determine the potential \,$q$\, uniquely.
\item
The A-function $a(\lambda)$ associated with the potential $q\in L^2([0,1],\RR)$ satisfies
\begin{equation}\label{Intqcc}
  c'(\lambda)\,a(\lambda) \,=\, -\int_0^1 q_-(x)\,c(x,\lambda)\, \tilde c(x,\lambda)\,dx\,,
\end{equation}
 in which $q_-(x)=\half\left( q(x)-q(1-x) \right)$ is the odd part of $q(x)$.
\item
For all $\lambda_0\in\CC$, if $a(\lambda_0)^2+1=0$, then $G(\lambda_0)$ has a double eigenvalue of geometric multiplicity $1$ and
\begin{equation}
  \frac{da}{d\lambda}(\lambda_0) \;=\; -\frac{s(\lambda)}{2} \int_0^1 \psi(x)^2\,dx,
\end{equation}
in which $\psi$ is the solution to $-\psi''+(q(x)-\lambda_0)\psi=0$ associated with the one-dimensional eigenspace of $G(\lambda_0)$ with $\psi(0)=1$.
\end{enumerate}
\end{theorem}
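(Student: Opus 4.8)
The plan is to dispatch the six items largely independently. Item~(ii) is immediate: for real $\lambda$ the equation $(-d^2/dx^2+q-\lambda)u=0$ has real coefficients and real Cauchy data, so $c_q(x,\lambda)$ and $c_{\tq}(x,\lambda)$ are real-valued and $a(\lambda)=\half\bigl(c(\lambda)-\tilde c(\lambda)\bigr)\in\RR$. For item~(v), put $c(x)=c_q(x,\lambda)$, $\tilde c(x)=c_{\tq}(x,\lambda)$ and form the cross-Wronskian $W(x)=c'(x)\tilde c(x)-c(x)\tilde c'(x)$; from $c''=(q-\lambda)c$ and $\tilde c''=(\tq-\lambda)\tilde c$ one gets $W'=(q-\tq)c\tilde c=2q_-c\tilde c$ with $W(0)=0$, while at $x=1$ the relation $\tilde c'=c'$ of~(\ref{csrelations}) gives $W(1)=c'(\lambda)\bigl(\tilde c(\lambda)-c(\lambda)\bigr)=-2\,c'(\lambda)a(\lambda)$, which is~(\ref{Intqcc}). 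For item~(iii), write $G_i(\lambda)=s_i(\lambda)^{-1}\left[\begin{smallmatrix}-c_i&1\\1&-\tilde{c}_i\end{smallmatrix}\right]$; a direct multiplication shows $[G_1,G_2]$ has vanishing diagonal and off-diagonal entries $\pm 2\bigl(a_1(\lambda)-a_2(\lambda)\bigr)/\bigl(s_1(\lambda)s_2(\lambda)\bigr)$, so, since $s_i(\lambda)\neq0$, the matrices commute iff $a_1(\lambda)=a_2(\lambda)$; when $\lambda\in\RR$ each $G_i(\lambda)$ is real symmetric hence diagonalizable, and commuting real symmetric matrices are simultaneously orthogonally diagonalizable.

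Items~(i) and~(iv) rest on classical one-dimensional inverse spectral theory. For~(i) the forward implication is trivial ($q=\tq\Rightarrow c=\tilde c\Rightarrow a=0$); for the converse, the zeros of $s(\lambda)$ are the Dirichlet--Dirichlet eigenvalues (shared by $q$ and $\tq$ since $s=\tilde s$), and the zeros of $c(\lambda)$, resp.\ $\tilde c(\lambda)$, are the Neumann(0)--Dirichlet(1) eigenvalues of $q$, resp.\ $\tq$; so $a\equiv0$ forces $q$ and $\tq$ to share both the DD and the ND spectra, whence $q=\tq$ a.e.\ by Borg's two-spectra uniqueness theorem (two separated spectra that agree at one endpoint and differ at the other determine the potential). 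For~(iv), the Dirichlet spectrum $\{\lambda_n\}$ determines $s(\lambda)$ by Hadamard factorization of the entire function of order $\half$ with its standard normalization; at each $\lambda_n$ the Wronskian gives $c(\lambda_n)s'(\lambda_n)=1$, so $s'(\lambda_n)$ is one of the two opposite-sign roots of $t^2+2a(\lambda_n)t-1=0$, and this ambiguity is broken by the norming-constant identity $\alpha_n:=\int_0^1 s_q(x,\lambda_n)^2\,dx=\tfrac{ds}{d\lambda}(\lambda_n)\,s'(\lambda_n)$ (a one-line Wronskian computation) together with $\alpha_n>0$ and $\tfrac{ds}{d\lambda}(\lambda_n)\neq0$ (simple eigenvalues); the recovered data $\{\lambda_n,s'(\lambda_n)\}$, equivalently $\{\lambda_n,\alpha_n\}$, then determine $q$ by the Gelfand--Levitan/Borg uniqueness theorem for $L^2$ potentials.

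For item~(vi), the eigenvalues of $G(\lambda)$ are $\mu_\pm=\bigl(-b(\lambda)\pm\sqrt{a(\lambda)^2+1}\bigr)/s(\lambda)$, using $b^2-a^2=c\tilde c=1+c's$ from the Wronskian; hence $a(\lambda_0)^2+1=0$ forces the double eigenvalue $\mu_0=-b(\lambda_0)/s(\lambda_0)$. Moreover $\lambda_0\notin\RR$ (else $a(\lambda_0)\in\RR$ by~(ii), so $a(\lambda_0)^2+1\geq1$), so $s(\lambda_0)\neq0$, $G(\lambda_0)$ is defined, and its off-diagonal entry $1/s(\lambda_0)\neq0$ shows $G(\lambda_0)$ is not scalar---so it is a nontrivial Jordan block, geometric multiplicity~$1$. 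Solving $(G(\lambda_0)-\mu_0)v=0$ gives eigenvector $(1,a(\lambda_0))$, hence the associated solution $\psi=c_q(\cdot,\lambda_0)+\mu_0\,s_q(\cdot,\lambda_0)$. For the derivative, differentiate $2a(\lambda)=c_q(1,\lambda)-s_q'(1,\lambda)$ and compute $\partial_\lambda c_q$, $\partial_\lambda s_q$ by variation of parameters (they solve inhomogeneous equations with zero Cauchy data at $0$): $\partial_\lambda c_q(x,\lambda)=c_q(x)\!\int_0^x\! s_q c_q-s_q(x)\!\int_0^x\! c_q^2$, and similarly for $s_q$. Evaluating at $x=1$ expresses $2\,\tfrac{da}{d\lambda}(\lambda)$ as $2b\,I_{cs}-s\,I_{cc}-c'\,I_{ss}$ with $I_{cc}=\int_0^1 c_q^2\,dx$, $I_{cs}=\int_0^1 c_q s_q\,dx$, $I_{ss}=\int_0^1 s_q^2\,dx$; expanding $-\tfrac{s(\lambda_0)}{2}\int_0^1\psi^2\,dx$ in these same three integrals and using $c'(\lambda_0)s(\lambda_0)=b(\lambda_0)^2$ (i.e.\ $c's=b^2-a^2-1$ at $a^2=-1$) makes the two expressions coincide.

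The step I expect to be the main obstacle is item~(iv): the Dirichlet spectrum and the A-function leave a genuine two-fold sign ambiguity in $s'(\lambda_n)$, and resolving it requires bringing in the positivity of the norming constants through the identity $\alpha_n=\tfrac{ds}{d\lambda}(\lambda_n)\,s'(\lambda_n)$ and then invoking the correct $L^2$-version of the inverse spectral uniqueness theorem---getting this bookkeeping right is the delicate part. By comparison items~(i)--(iii) and~(v) are short, and item~(vi), though the longest computation, collapses once one notices that the Wronskian relation $b^2-a^2-1=c's$ degenerates at $a^2+1=0$.
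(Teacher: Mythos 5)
Your proof is correct, and for most parts it follows the same lines as the paper's; the local differences are worth recording. In part (1) you feed Borg's theorem the Dirichlet--Dirichlet and Neumann(0)--Dirichlet(1) spectra (using $s=\tilde s$ together with $c=\tilde c$), whereas the paper uses the Neumann(0)--Dirichlet(1) and Neumann(0)--Neumann(1) spectra (using $c=\tilde c$ together with $c'=\tilde c'$); both pairs share a boundary condition at one endpoint and differ at the other, so Borg's uniqueness theorem applies equally well. In part (3) your route to simultaneous diagonalizability for real $\lambda$ --- that each $G_i(\lambda)$ is then real symmetric --- is a clean alternative to the paper's observation that $s(\lambda)G(\lambda)$ has distinct eigenvalues when $a(\lambda)^2+1\neq 0$, a condition which part (2) guarantees for real $\lambda$. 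In part (4) the paper quotes the P\"oschel--Trubowitz spectral data $\kappa_n=\log|s'_q(\mu_n)|=\sinh^{-1}\!\bigl((-1)^n a_q(\mu_n)\bigr)$ and resolves the sign via the $(-1)^n$ pattern of $s'_q(\mu_n)$ coming from the nodal count of Dirichlet eigenfunctions, whereas you resolve the same two-fold ambiguity by reconstructing the norming constants $\alpha_n=\dot s(\mu_n)\,s'(\mu_n)>0$; these are two equivalent normalizations of the same $L^2$ inverse-spectral uniqueness theorem from P\"oschel--Trubowitz, and your Wronskian identity for $\alpha_n$ (with the sign you state) checks out. For part (6), the paper organizes the derivative as $2\,da/d\lambda=\int_0^1(\cc\,s - c\,\cs)\,dx$ via a Wronskian of left- and right-normalized solutions and then substitutes the Jordan basis $\psi,\phi$, while you compute $\partial_\lambda c(1,\lambda)$ and $\partial_\lambda s'(1,\lambda)$ by variation of parameters; I verified that your expression $2\,da/d\lambda=2b\,I_{cs}-s\,I_{cc}-c'\,I_{ss}$ together with $c's=b^2-a^2-1$ reduces, at $a^2=-1$, exactly to $-\tfrac{s}{2}\int_0^1\psi^2\,dx$, so your computation does close, even though you only sketch the final cancellation. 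Parts (2) and (5) match the paper's arguments essentially verbatim (in (5) you take $W=c'\tilde c - c\tilde c'$, the negative of the paper's Wronskian, which only flips intermediate signs).
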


\begin{proof}  %%%%%%%%%%% BEGIN PROOF %%%%%%%%%%%%%
A very brief proof of part (1) can be found in~\cite[Lemma~4]{Yurko1975}; a more detailed proof is given here.
It is consequence of Borg's theorem on the determination of $q$ from the spectra for two different boundary conditions (\cite{Borg1946}, \cite[Theorem~1.4.4]{FreilingYurko2001}), which was observed through correspondence~\cite{BrownSchmidtWood2018}.  The argument goes as follows.
The zero set of $c(\lambda)$ is equal to the spectrum for the potential $q$ with boundary conditions $u'(0)=0$ and $u(1)=0$ (N-D spectrum), and the zero set of $c'(\lambda)$ is the spectrum for $q$ with conditions $u'(0)=0$ and $u'(1)=0$ (N-N spectrum).  Given that $a(\lambda)=0$, one has $c(\lambda)=\tilde c(\lambda)$; and due to the relations~(\ref{csrelations}), one also has $c'(\lambda)=\tilde c'(\lambda)$.  This implies that the N-D spectra for both potentials $q$ and $\tilde q$ are identical and the N-N spectra for both potentials are identical.  This is sufficient, by Borg's Theorem, to guarantee the (almost-everywhere) equality of $q$ and $\tilde q$.  

Part (2) results from $c(\lambda)$ and $\tilde c(\lambda)$ being real whenever $\lambda$ is real, which is a consequence of $q$ being real valued.

Part (3) is a straightforward calculation.  Verification of the last sentence uses part (2) and the fact that $s(\lambda)G(\lambda)$ is diagonalizable when $a(\lambda)^2+1\not=0$.

Part (4) ensues from the unique determination of \,$q$\, from its Dirichlet spectrum $\{\mu_n(q)\}_{n=1}^\infty$ and spectral data $\{\kappa_n(q)\}_{n=1}^\infty$ defined in~\cite[p.~59]{PoschelTrubowitz1987}
\begin{equation}
  \kappa_n(q) \;=\;
  \log \left| s'_q(\mu_n(q)) \right| \;=\;
  \sinh^{-1}\!\big( (-1)^n\,a_q(\mu_n(q)) \big),
\end{equation}
in which the last equality uses the expression (\ref{ausings}) and the fact that the $n^\mathrm{th}$ Dirichlet eigenfunction of $-d^2/dx^2+q(x)$ has $n-1$ roots inside the interval $(0,1)$~\cite[Theorem~6,\,p.~41]{PoschelTrubowitz1987}.
This result on unique determination of $q$ is Theorem~5 of~\cite[p.~62]{PoschelTrubowitz1987}.

To prove part (5),  Let $q(x)=q_+(x)+q_-(x)$ be such that $q_+(x)=q_+(1-x)$ and $q_-(x)=-q_-(1-x)$.  The functions $c=c(x,\lambda)$ and $\tilde c = \tilde c(x,\lambda)$ satisfy
\begin{align}
  -c'' + \big(q_+(x)+q_-(x)\big)c - \lambda c &= 0  \\
  -\tilde c'' + \big(q_+(x)-q_-(x)\big)\tilde c - \lambda\tilde c &= 0\,,
\end{align}
in which the prime denotes differentiation with respect to the first argument $x$, with
\begin{equation}
  c(0,\lambda)=\tilde c(0,\lambda)= 1,
  \quad
    c'(0,\lambda)=\tilde c'(0,\lambda)= 0.
\end{equation}
Multiplying the first differential equation by $\tilde c$ and the second by $c$ and then subtracting yields
\begin{equation}
  \left( c\tilde c' - \tilde c c' \right)' + 2q_- c\tilde c \,=\, 0\,.
\end{equation}
Using the initial conditions for $c$ and $\tilde c$ and the fact that $c'(1,\lambda)=\tilde c'(1,\lambda)$, one obtains 
the formula in the proposition.

The proof of part (6) is deferred to the end of section~\ref{sec:riemannsurface}.
\end{proof}
%%%%%%%%%%% END PROOF %%%%%%%%%%%%%

\subsection{Riemann surface for an edge}\label{sec:riemannsurface}  %%%%%%%%%%%%%%%%

Analysis of the decomposition of coupled quantum-graph operators and the reduction of the Floquet surface in subsequent sections
is based on the spectral resolution of the the DtN map $G(\lambda)$ for each directed connecting edge $e=(v,w)$.  This spectral resolution is naturally defined on the Riemann surface associated with the characteristic polynomial of $G(\lambda)$.
A complete theory of the spectral resolution of meromorphic operators in finite dimensions on Riemann surfaces is available in~\cite[Ch.3\,\S4]{Baumgartel1985}.

It is convenient to deal with the entire matrix function
\begin{equation}
  s(\lambda) G(\lambda) \,=\,
  -\renewcommand{\arraystretch}{1.2}
\left[
\begin{array}{cc}
  \!\!b(\lambda) & 0 \\
  0 & b(\lambda)\!\!
\end{array}
\right]
  +\renewcommand{\arraystretch}{1.2}
\left[
\begin{array}{cc}
  \!\!-a(\lambda) & 1 \\
  1 & a(\lambda)\!\!
\end{array}
\right]
\end{equation}
and treat the spectral theory of the trace-free part of $s(\lambda) G(\lambda)$,
\begin{equation}
\renewcommand{\arraystretch}{1.2}
  N(\lambda) =
  \left[
\begin{array}{cc}
  \!\!-a(\lambda) & 1 \\
  1 & a(\lambda)\!\!
\end{array}
\right],
\end{equation}
which involves only the A-function of the edge.  The characteristic polynomial in $\mu$ of $N(\lambda)$ is
\begin{equation}
  p(\lambda,\mu) \;=\; \mu^2 - (a(\lambda)^2 + 1).
\end{equation}
One computes that the projection associated to an eigenvalue $\mu$ is
\begin{equation}
  P_{\!\mu} \,=\, \frac{1}{2\mu}
  \renewcommand{\arraystretch}{1.3}
\left[
\begin{array}{cc}
  (a(\lambda)+\mu)^{-1} & 1 \\
  1 & a(\lambda)+\mu
\end{array}
\right].
\end{equation}
This is a meromorphic matrix function on a Riemann surface, described next.

$N(\lambda)$ has an analytic eigenvalue $\mu$ on the Riemann surface defined by the zero-set of $p(\lambda,\mu)$,
\begin{equation}
  \Ss \,=\, \left\{ (\lambda,\mu)\in\CC^2 : \mu^2 = a(\lambda)^2 + 1 \right\}.
\end{equation}
The corresponding eigenvalue of $s(\lambda)G(\lambda)$ is $-b(\lambda)+\mu$, and the other eigenvalue is $-b(\lambda)-\mu$.
The projection $\Ss\to\CC :: (\lambda,\mu)\mapsto\lambda$ is ramified over the set of points
\begin{equation}
  \big\{ \lambda\in\CC : a(\lambda) \in \left\{ i, -i \right\} \big\}
  \qquad
  \text{(ramification points)}.
\end{equation}

About a point $(\lambda_0,\mu)$ with $\mu\not=0$, that is, where $\lambda_0$ is not a ramification point, the variable $\lambda$ serves locally as a complex coordinate for $\Ss$.
At a point $(\lambda_0,0)$ on $\Ss$ above a ramification point $\lambda_0$, the relation $p(\lambda,\mu)=0$ can be written as
\begin{equation}
  \mu^2 = (\lambda-\lambda_0)^n f(\lambda),
  \qquad f(\lambda_0) \not=0, \quad n\geq1.
\end{equation}
If $n=1$, then $\mu$ serves as a local analytic coordinate for $\Ss$ about $(\lambda_0,0)$.
When $n\geq2$, the point $(\lambda_0,0)\in\Ss$ is singular but can be regularized as follows.
Let $\tilde f(\lambda)$ be analytic and non-vanishing in a neighborhood of $\lambda=\lambda_0$ with the property that $\tilde f(\lambda)^2 = f(\lambda)$.
In the case that $n=2m$ is even, there are two sheets above a neighborhood of $(\lambda_0,0)$,
\begin{equation}
  \left\{ (\lambda,\mu) : \mu = (\lambda-\lambda_0)^m \tilde f(\lambda) \right\}
  \quad\text{and}\quad
  \left\{ (\lambda,\mu) : \mu = - (\lambda-\lambda_0)^m \tilde f(\lambda) \right\}.
\end{equation}
In a neighborhood of $\lambda_0$, each sheet has $\lambda$ as an analytic coordinate and the two sheets intersect only in $(\lambda_0,0)$.  By a mild abuse of notation, one can replace the two local sheets by their disjoint union so that $\Ss$ becomes regular at $(\lambda_0,0)$.
In the case that $n=2m+1$ is odd, a neighborhood of $(\lambda_0,0)$ in $\Ss$ can be realized as a connected complex surface by taking $w=\sqrt{\lambda-\lambda_0\,}$ as an analytic coordinate.  More precisely, a neighborhood of $0$ in the $w$-plane maps onto a neighborhood of $(\lambda_0,0)$ in $\Ss$ by means of the map
\begin{equation}
  w \mapsto \big( w^2 + \lambda_0,\, w^{2m+1}\tilde f(w^2 + \lambda_0) \big)\,.
\end{equation}
The projections $P_{\!\mu}$ are meromorphic functions on $\Ss$, regularized as described.  The principal part at a point where $a(\lambda_0)=\pm i$\, is
\begin{equation}
  \frac{\pm1}{2\tilde f(\lambda_0) (\lambda-\lambda_0)^m}
  \renewcommand{\arraystretch}{1.2}
\left[
  \begin{array}{cc}
    \mp i & 1 \\
    1 & \pm i
  \end{array}
\right]
\quad
\text{for $n=2m$}
\end{equation}
and
\begin{equation}
  \frac{1}{2\tilde f(\lambda_0) w^{2m+1}}
  \renewcommand{\arraystretch}{1.2}
\left[
  \begin{array}{cc}
    \mp i & 1 \\
    1 & \pm i
  \end{array}
\right]
\quad
\text{for $n=2m+1$}.
\end{equation}

\begin{proof}[Proof of part (6) of Theorem~\ref{thm:a}]
This proof follows a standard technique for finding derivatives of spectral functions for Schr\"odinger operators in one dimension.
Define
\begin{equation}
  \cc(x,\lambda) = \tilde c(1-x,\lambda),
  \qquad
  \cs(x,\lambda) = \tilde s(1-x,\lambda),
\end{equation}
which satisfy the equation $-u''+ q(x)u=\lambda u$ and the initial conditions
$\cc(1,\lambda)=1$, $d\cc(x,\lambda)/dx|_{x=1}=0$
and $\cs(1,\lambda)=0$, $d\cs(x,\lambda)/dx|_{x=1}=-1$.

Consider the Wronskian
\begin{equation}
  W(x;\lambda_1,\lambda_2) :=
   \renewcommand{\arraystretch}{1.3}
\left|
  \begin{array}{cc}
    c(x,\lambda_1) & \cs(x,\lambda_2) \\
    c'(x,\lambda_1) & \cs'(x,\lambda_2) \\
  \end{array}
\right|,
\end{equation}
whose values at the endpoints are
\begin{align}
  W(0;\lambda_1,\lambda_2) &= \cs'(0,\lambda_2) = -c(1,\lambda_2)  \\
  W(1;\lambda_1,\lambda_2) &= -c(1,\lambda_1).
\end{align}
Using the equation $-u''+ q(x)u=\lambda u$ for both $c$ and $\cs$ yields
\begin{equation}
\begin{split}
  \frac{d}{dx} W(x,\lambda_1,\lambda_2)
  &\;=\;
     \renewcommand{\arraystretch}{1.3}
\left|
  \begin{array}{cc}
    c(x,\lambda_1) & \cs(x,\lambda_2) \\
    c''(x,\lambda_1) & \cs''(x,\lambda_2) \\
  \end{array}
\right| \\
 &\;=\;  - \left|
  \begin{array}{cc}
    c(x,\lambda_1) & \cs(x,\lambda_2) \\
    \lambda_1c(x,\lambda_1) & \lambda_2\cs(x,\lambda_2) \\
  \end{array}
\right| 
\;=\;
-(\lambda_2-\lambda_1) c(x,\lambda_1)\cs(x,\lambda_2).
\end{split}
\end{equation}
Integrating over the $x$-interval $[0,1]$ gives the difference quotient
\begin{equation}
  \frac{c(1,\lambda_1)-c(1,\lambda_2)}{\lambda_1-\lambda_2}
  \;=\;
  -\int_0^1 c(x,\lambda_1)\cs(x,\lambda_2)dx\,,
\end{equation}
which leads to the derivative
\begin{equation}
  \frac{d}{d\lambda}c(1,\lambda) \;=\; -\int_0^1 c(x,\lambda)\cs(x,\lambda)dx\,.
\end{equation}
A similar calculation yields
\begin{equation}
  \frac{d}{d\lambda}\cc(0,\lambda) \;=\; - \int_0^1 \cc(x,\lambda)s(x,\lambda)dx\,.
\end{equation}
The derivative of the A-function is
\begin{equation}\label{Aprime}
  2\frac{d}{d\lambda}a(\lambda)
  \;=\; \frac{d}{d\lambda} \big( c(1,\lambda) - \cc(0,\lambda) \big)
  \;=\; \int_0^1 \big( \cc(x,\lambda)s(x,\lambda) - c(x,\lambda)\cs(x,\lambda) \big)dx\,.
\end{equation}

Now suppose that at some $\lambda\in\CC$, $a(\lambda)^2+1=0$, and let $a(\lambda)=i$ (the case $a(\lambda)=-i$ is treated similarly).
The normal form for $G(\lambda)$ is a single Jordan block with eigenvalue $-b(\lambda)/s(\lambda)$.
The eigenspace is spanned by $v_1=[1,i]^t$.  This means that there is a function $\psi(x)$ satisfying
$-\psi'' + q(x)\psi = \lambda\psi$ and having boundary values
\begin{equation}
  \psi(0)=1,\,\psi'(0)=-b(\lambda)/s(\lambda),
  \qquad
  \psi(1)=i,\,\psi'(1)=ib(\lambda)/s(\lambda).
\end{equation}
A generalized eigenvector is $v_2=[0,1]^t$, meaning that $G(\lambda)v_2=(v_1-b(\lambda)v_2)/s(\lambda)$.  This vector corresponds to a function $\phi(x)$ satisfying
$-\phi'' + q(x)\phi = \lambda\phi$ and having boundary values
\begin{equation}
  \phi(0)=0,\,\phi'(0)=1/s(\lambda),
  \qquad
  \phi(1)=1,\,\phi'(1)=(b(\lambda)-i)/s(\lambda).
\end{equation}
One can verify the following equalities by checking the initial conditions,
\begin{align}
  c(x,\lambda) &= \psi(x) + b(\lambda) \phi(x) \\
  \cc(x,\lambda) &=  (b(\lambda)-i)\psi(x) -ib(\lambda) \phi(x)\\
  s(x,\lambda) &= s(\lambda) \phi(x) \\
  \cs(x,\lambda) &= s(\lambda) \psi(x) -is(\lambda) \phi(x),
\end{align}
from these relations follows
\begin{equation}
  \cc(x,\lambda)s(x,\lambda) - c(x,\lambda)\cs(x,\lambda)
  \;=\;
  -s(\lambda)\,\psi(x)^2
\end{equation}
and thence
\begin{equation}
\begin{split}
  \frac{d}{d\lambda}a(\lambda)
  &\;=\; -\frac{s(\lambda)}{2} \int_0^1 \psi(x)^2 dx\,.
\end{split}
\end{equation}
\end{proof}

\section{Bilayer quantum graphs: Coupling by edges}\label{sec:bilayer}

This section defines a periodic quantum graph and the procedure of coupling two identical graphs with auxiliary edges to form a new periodic quantum graph called a {\em bilayer quantum graph}, formalized in Definition~\ref{def:bilayer}.
Some background and notation is needed for providing a precise construction to support this definition.

This section also describes the energy-dependent reduction of a quantum graph to a combinatorial graph and the Floquet transform and the Floquet surface (or Fermi surface) for these periodic graphs.  The description of quantum graphs essentially follows \cite{BerkolaikoKuchment2013}, but the notation is developed to suit periodic bilayer quantum graphs.

\subsection{Periodic quantum graphs}\label{sec:periodicqg}

A periodic quantum graph $\G$ consists of the following structure.  Some of the notation may seem technical, but it is all very natural.

\smallskip
(1) An underlying graph with vertex set $\V=\V(\G)$ and edge set $\E=\E(\G)$ is endowed with an action by the group $\ZZ^n$ that preserves the vertex-edge incidence and such that $\G/\ZZ^n$ is a finite graph.  The action of $g\in\ZZ^n$ on a vertex or edge of $\G$ is denoted by $v\mapsto gv$ or~$e\mapsto ge$.
A fundamental domain of the $\ZZ^n$ action is denoted by~$W$, and it has by assumption finitely many vertices and edges.

An edge in $e\in\E$ is an unordered set $e=\{v,w\}$ of vertices.  It will be necessary to allow any edge $\{v,w\}$ to assume either of the two directions associated with the ordered pairs $(v,w)$ and~$(w,v)$.  For each vertex $v\in\V$, let $\dE(v)$ denote the set of directed edges incident to $v$, directed away from $v$,
\begin{equation}
  \dE(v) := \{ (v,w) : w\in\V,\, \{v,w\}\in\E \}.
\end{equation}
Thus, if $e=(v,w)\in\dE(v)$, then $\tilde e=(w,v)\in\dE(w)$.  The symbol $e$ may denote either an edge or a directed edge; a directed edge may also be denoted with an arrow $\dir{e}$ when the distinction between undirected and directed is necessary.

\smallskip
(2) $\G$ becomes a metric graph by associating each edge $e=\{v,w\}\in\E$ with an interval $[0,L_e]$.  The directed edge $\dir{e}=(v,w)$ is referred to coordinate $x_{\dir{e}}\in[0,L_e]$ with $x_{\dir{e}}=0$ corresponding to $v$ and $x_{\dir{e}}=L_e$ corresponding to $w$.  The oppositely directed edge $\tilde{\dir{e}}=(w,v)$ is referred to coordinate $x_{\tilde{\dir{e}}}\in[0,L_e]$, with $x_{\dir{e}}+x_{\tilde{\dir{e}}}=L_e$.  Assume that this metric structure is invariant under the action of $\ZZ^n$, and let the action of $g\in\ZZ^n$ on a point $x$ in $\Gamma$ be denoted by~$x\mapsto gx$ ($x$ may be in the interior of an edge or at an endpoint corresponding to a vertex).
This ``metrization" allows one to define standard function spaces on any edge~$e$, such as the Sobolev spaces $H^s(e)$.  

\smallskip
(3) One renders $\G$ a periodic quantum graph by pairing it with a Schr\"odinger operator $A$ that commutes with the $\ZZ^n$ action.  On each edge $e$, $A$ acts by $-D^2 + q_e(x)$.  Here, $D^2=d^2/dx^2_{\dir{e}}$\,, with $\dir{e}$ referring to either direction; and $x$ is any point along $e$.
$A$ acts on functions $f=\{f_e\}_{e\in\E}$ defined on all of $\G$ and that satisfy a Robin condition at each vertex
\begin{equation}\label{Robin} 
  \sum_{e\in\dE(v)} f'_e(v) \,=\, \alpha_v\,f(v)\,.
\end{equation}
Here, $f_e$ is the restriction of a function $f$ on $\G$ to $e$, and if $e=(v,w)$, $f'_e(v)$ is the derivative of $f_e$ at the vertex $v$ directed away from $v$ toward $w$, that is, the derivative from the right of $f_e$ with respect to the coordinate $x_e$ at $x_e=0$.
The vertex condition~(\ref{Robin}) is also known as a $\delta$-type coupling or matching condition~\cite{Exner1997}.

To define $A$ precisely, first set
\begin{equation}
  H^2(\G) \,=\,
  \left\{ f=\{f_e\}_{e\in\E} : f \text{ is continuous; } f_e\in H^2(e) \,\forall e\in\E;\, f,f',f''\in L^2(\G) \right\},
\end{equation}
in which derivatives $f'$ and $f''=D^2f$ are taken on each edge with respect to the coordinates introduced above;
and let $q=\left\{ q_e \right\}_{e\in\E}$, with $q_e\in L^2(e)$ for each $e\in\E(\G)$ be a real-valued potential function.
Then the domain of $A$ and its action thereon are given by
\begin{eqnarray}
  \dom(A) &=& \left\{ f\in H^2(\G) : f \text{ satisfies \ref{Robin} } \forall v\in\V(\G) \right\}, \\
  (Af) (x) &=& -f''(x) + q(x) f(x)\,.
\end{eqnarray}
The Robin condition \ref{Robin} makes sense because $f\in H^2(\Gamma)$ has well defined derivatives at the endpoints of each edge.  That $A$ is self-adjoint in $L^2(\G)$ is subsumed by \cite[Theorem~1.4.4]{BerkolaikoKuchment2013}.

The periodicity of $A$ means that $\alpha_{gv} = \alpha_v$ and $q_{ge}(gx) = q_e(x)$ for all $v\in\V(\G)$ and $x\in e\in\E(\G)$ and for all $g\in\ZZ^n$.

The {\em Floquet modes} of $A$ (simultaneous eigenfunctions of $A$ and $\ZZ^n$) do not lie in $\dom(A)$, but in a larger space of functions that are locally like those in $\dom(A)$,
\begin{equation}
  \Hloc(\G) \,=\,
  \left\{ f=\{f_e\}_{e\in\E} : f \text{ is continuous; } f_e\in H^2(e) \,\forall e\in\E; \text{ $f$ satisfies \ref{Robin} } \forall v\in\V  \right\}.
\end{equation}

{\bfseries\slshape Reduction to a combinatorial graph.}
It is common to investigate the eigenvalue problem $(A-\lambda)u = 0$ for $u\in\Hloc(\G)$ by reducing it to an equivalent nonlinear-in-$\lambda$ eigenvalue problem $\Af(\lambda)\bar u = 0$ for a combinatorial graph, as long as $\lambda$ is not a Dirichlet eigenvalue for any edge, that is, $s_e(\lambda):=s_{q_e}(\lambda)\not=0$ for all $e\in\E(\G)$.
The {\em Dirichlet spectrum} of a quantum graph $(\G,A)$ is the set $\sigma_D(A)$ consisting of all the Dirichlet eigenvalues of all the edges.  For a periodic quantum graph for which a fundamental domain consists of a finite number of vertices and edges (that is, the set of orbits of the $\ZZ^n$ action is a finite graph), this set is discrete,
\begin{equation}\label{sigmaD}
  \sigma_D(A) \,=\, \left\{ \lambda\in\CC : \exists e\in\V(\G), s_e(\lambda) = 0 \right\}\,.
\end{equation}
For $\lambda\not\in\sigma_D(A)$, the equation $(A-\lambda)u = 0$ is equivalent to $\Af(\lambda)\bar u = 0$, where $\bar u$ is the restriction of $u$ to $\V(\G)$ and $\Af(\lambda)$ is a periodic ({\itshape i.e.}, $\ZZ^n$-invariant) operator that acts on functions defined on $\V(\G)$.  This reduction is accomplished by invoking the Dirichlet-to-Neumann map $G_{q_e}(\lambda)$ (\ref{Gq}) for each edge~$e$ to rewrite the Robin condition~\ref{Robin} solely in terms of the values of the function $u$ at~$v$ and all of its adjacent vertices.  One obtains
\begin{equation}\label{Robin2} 
  [\Af(\lambda)\bar u](v) \,:=\, \sum_{e=(v,w)\in\dE(v)} \frac{1}{s_e(\lambda)}\, \bar u(w)
  \,-\,  \left( \alpha_v + \sum_{e\in\dE(v)} \frac{c_e(\lambda)}{s_e(\lambda)} \right) \bar u(v)
  \,=\, 0\,.
\end{equation}
As a definition of the operator $\Af(\lambda)$, it is understood that $\bar u : \V(\G) \to \CC$ is arbitrary.
In conclusion, one has
\begin{equation}
  (A-\lambda)u = 0
  \quad\iff\quad
  \Af(\lambda)\bar u = 0\,.
\end{equation}
$\Af(\lambda)$ is called the reduced $\lambda$-dependent combinatorial operator associated with the quantum-graph operator~$A$.

\subsection{Floquet transform and Floquet surface}\label{sec:floquet}

The Floquet transform is the Fourier transform with respect to the $\ZZ^n$ action on the graph $\G$.  Given a function $f$, whose domain $\G$ includes points on the edges in the case of a metric graph (or just the vertex set of the graph in the case of a combinatorial graph), define its Floquet transform by
\begin{equation}
  \hat f(z,x) := \sum_{g\in\ZZ^n} f(gx)\, z^{-g}
  \qquad
  \text{for } z = (z_1,\dots,z_n) \in (\CC^*)^n,
\end{equation}
in which $z^h = z_1^{h_1}\cdots z_n^{h_n}$ for $h\in\ZZ^n$.  
This is a formal Laurent series in the symbol $z$ whose coefficients are shifts of $f$.  
The essential property of $\hat f$ is its quasi-periodicity in $x$,
\begin{equation}\label{qper}
  \hat f(z,gx) \,=\, \hat f(z,x)\, z^g\,,
\end{equation}
which makes $\hat f(z,\cdot)$ an eigenfunction for the $\ZZ^n$ action with eigenvalue $z^g$ for $g\in\ZZ^n$.

If $f\in L^2$, the Fourier inversion theorem holds,
\begin{equation}
  f(x) = \frac{1}{(2\pi)^n} \int_{\TT^n} \hat f(z,x)\, dV(z)\,,
\end{equation}
in which $dV$ is the $n$-dimensional volume measure on the $n$-torus
\begin{equation}
  \TT^n := \left\{ z\in\CC^* : |z_1|=\dots=|z_n|=1 \right\}.
\end{equation}
Property~\ref{qper} shows that $\hat f(z,\cdot)$ is determined by its restriction to a fundamental domain $W$ of $\G$.  When $\hat f$ is considered as a function of variables $z\in\TT^n$ and $x\in W$, the Floquet transform is a unitary transformation of Hilbert spaces,
\begin{equation}
  \hat{} \;:\;  L^2(\G) \to L^2(\TT^n;L^2(W))\,. 
\end{equation}
If $f$ is supported in a finite number of translations $gW$ of $W$, $\hat f(z,x)$ reduces to a Laurent polynomial.

If an operator $A$ commutes with the $\ZZ^n$ action, the Floquet transform converts $A$ into a multiplication operator in the variable $z$ in the sense that
\begin{equation}\label{FloquetA}
  (Af)\,\hat{} (z,\cdot) \,=\, \hat A(z) \hat f(z,\cdot),
\end{equation}
in which $\hat A(z)$ is a linear operator in $L^2(W)$.  If $A$ is a periodic quantum-graph operator, as described in the previous section, then the operator $\hat A(z)$ has the form $-D^2+q(x)$ as a differential operator, and it depends on $z$ only through its domain, which consists restrictions to $W$ of functions in the space $H^2_z(\G)$ of eigenfunctions for $\ZZ^n$,
\begin{equation}
  H^2_z(\G) := \left\{ f\in\Hloc(\G) : f(g\,\cdot) = z^g f \;\;\forall g\in\ZZ^n \right\}.
\end{equation}
Under the Floquet transform, the combinatorial graph operator $\Af(\lambda)$ becomes an operator $\hat\Af(\lambda,z)$.
This operator acts in the finite-dimensional complex vector space $L^2(\V(W))=\CC^{\V(W)}$, which consists of functions defined on the vertex set of a fundamental domain $W$ (so $\V(W)$ acts like a basis).  A typical element of $\CC^{\V(W)}$ is denoted by $\bar u=\left\{ u(v) \right\}_{v\in\V(W)}$.  Thus
\begin{equation}
  \hat\Af(\lambda,z) : \CC^{\V(W)} \to \CC^{\V(W)}.
\end{equation}
The operator $\hat\Af(\lambda,z)$ is a Laurent polynomial in $z$ with matrix-valued coefficients that are meromorphic in~$\lambda$ with poles at the Dirichlet eigenvalues of the edges.

For any fixed value of the spectral variable $\lambda\in\CC\setminus\sigma_D(A)$, the {\em Floquet surface} for $\lambda$ is defined to be the~set
\begin{equation}
  \Phi_\lambda \,:=\,
  \left\{ z\in(\CC^*)^n : \exists\,\text{nontrivial solution of } \hat\Af(\lambda,z) \bar u = 0 \right\}\,.
\end{equation}
If $\lambda\not\in\sigma_D(A)$, then the nontrivial solvability of $(\hat A(z)-\lambda)u = 0$ is equivalent to the nontrivial solvability of $\hat\Af(\lambda,z) \bar u = 0$.  This is in turn equivalent to the vanishing of the Laurent polynomial
\begin{equation}
  D(\lambda,z) := \det \hat\Af(\lambda,z),
\end{equation}
that is,
\begin{equation}
  \Phi_\lambda \,=\,
  \left\{ z\in(\CC^*)^n :  D(\lambda,z)=0 \right\}
  \qquad
  \text{for }\;
  \lambda\not\in\sigma_D(A)\,.
\end{equation}
Observe that the factorization of $D(\lambda,z)$ into eigenvalues of $\Af(\lambda,z)$ typically does not constitute reducibility because the eigenvalues generically are not polynomials in $z_1,\dots,z_n$.
Recall from the discussion after equation (\ref{Floquetsurface}) that the reducibility of $\Phi_\lambda$ is equivalent to the factorization of $D(\lambda,z)$ as a Laurent polynomial in the variables $z=(z_1,\dots,z_n)$ into two distinct nontrivial Laurent polynomial factors, $D(\lambda,z)=D_1(\lambda,z)D_2(\lambda,z)$.

\subsection{Constructing a bilayer periodic graph}

Let $(\Go,\Ao)$ be a periodic quantum graph, as described in the previous section, with vertex set $\Vo$, edge set~$\Eo$ and potential functions $\{q_e\}_{e\in\Eo}$.  Let a new periodic quantum graph $(\G,A)$ be constructed by connecting pairs of respective vertices of two disjoint copies of $(\Go,\Ao)$ with unit-length edges endowed with potentials that preserve the periodicity.  Precisely, the vertex set $\V=\V(\G)$ is the disjoint union
\begin{equation}
  \V := \Vo \sqcup \Vo
     = \Vo \times \left\{ 1,2 \right\},
\end{equation}
so an element of $\V$ is of the form $(v,1)$ or $(v,2)$, with $v\in\Vo$.
The edge set is
\begin{equation}
  \E := \Eo \sqcup \Eo \cup \E_c\,,
\end{equation}
in which the set of connecting edges is
\begin{equation}
  \E_c \,:=\, \left\{ e_v := \left\{ (v,1), (v,2) \right\} : v\in\Vo \right\}
\end{equation}
and one identifies an element $(\left\{ v,w \right\}, i) \in \Eo \sqcup \Eo$ with the element $\left\{ (v,i), (w,i) \right\}$ of pairs of vertices in $\V$.  The graph $\Go$ inherits the group action of $\ZZ^n$, so it is $n$-fold periodic.

Of course, the edges of $\G$ in the two copies of $\Go$ inherit the coordinates and potentials from $\Go$, that is, for $e=\left\{ (v,i),(w,i) \right\}$, one has $q_e=q_{\left\{ v,w \right\}}$, which naturally allows one to use the coordinate $x_{(v,w)}$ for the directed edge $\left( (v,i),(w,i) \right)$.  Each connecting edge $e\in\E_c$ is endowed with a coordinate in $[0,1]$ (so $L_e=1$) and a potential function $q_e$.  These potentials are assumed to be periodic, that is $q_{ge}(gx) = q_e(x)$ for all $g\in\ZZ^n$ and all $e\in\E_c$.  For the connecting edge $\left\{ (v,1), (v,2) \right\}$ associated with the vertex $v\in\Go$, it is convenient to denote the potential by $q_v$,
\begin{equation}
  q_{\left\{ (v,1), (v,2) \right\}} \,=\, q_v\,.
\end{equation}
The spectral functions $s_e(\lambda)$ for undirected connecting edges and the functions $c_e(\lambda)$ for directed connecting edges are denoted by
\begin{equation}
  \renewcommand{\arraystretch}{1.1}
\left.
\begin{array}{l}
  s_{\left\{ (v,i),(v,\itilde) \right\}} = s_v(\lambda) \\
  c_{\left( (v,i),(v,\itilde) \right)} = c_{v,i}(\lambda)
\end{array}
\right\}
  \qquad
  i,\itilde \in\left\{ 1,2 \right\},\; i\not=\itilde.
\end{equation}
The $a$ and $b$ functions for the directed connecting edge $((v,1),(v,2))$ are
\begin{equation}
  \renewcommand{\arraystretch}{1.1}
\left.
\begin{array}{l}
  a_v(\lambda) = \half \left( c_{v,1}(\lambda) - c_{v,2}(\lambda) \right) \\
  b_v(\lambda) = \half \left( c_{v,1}(\lambda) + c_{v,2}(\lambda) \right)
\end{array}
\right..
\end{equation}
The Dirichlet-to-Neumann matrix $G_v(\lambda)$ for $((v,1),(v,2))$ is given by
\begin{equation}\label{Gv}
  s_v(\lambda) G_v(\lambda) =
\renewcommand{\arraystretch}{1.2}
\left[
  \begin{array}{cc}
    -c_{v,1}(\lambda) & 1 \\
    1 & -c_{v,2}(\lambda) 
  \end{array}
\right]
\,=\,
- \left[
  \begin{array}{cc}
    b_v(\lambda) & 0 \\
    0 & b_v(\lambda) 
  \end{array}
\right]
+
\left[
  \begin{array}{cc}
    -a_v(\lambda) & 1 \\
    1 & a_v(\lambda) 
  \end{array}
\right]\,.
\end{equation}

Having the metric graph $\G$ together with the potentials on its edges, section \ref{sec:periodicqg} defines a self-adjoint operator $A$ in $L^2(\G)$, with the same Robin condition \ref{Robin}.  This operator commutes with the $\ZZ^n$ action because of the periodicity of the potentials $q_e$.

\begin{definition}[bilayer graph]\label{def:bilayer}
  Let be given a quantum graph $(\Go,\Ao)$ with a $\ZZ^n$ symmetry group and potentials $\{ q_v : v\in\V(\Go) \}$ defined on edges $\left\{ (v,1), (v,2) \right\}$ of the the disjoint union $\V(\Go) \sqcup \V(\Go)$.  Let the potentials $q_v$ be invariant under $\ZZ^n$.  The periodic quantum graph $(\G,A)$ obtained by connecting two copies of $(\Go,\Ao)$ by edges of unit length with the potentials $q_v$, as described in this section, is called the {\em bilayer quantum graph} associated to $(\Go,\Ao)$ and the potentials $\{ q_v : v\in\V(\Go) \}$.
\end{definition}

{\bfseries\slshape Reduction to a combinatorial graph.}
In the formulation of the eigenvalue problem $(A-\lambda)u=0$ for a bilayer quantum graph in terms of the restriction $\bar u$ of $u$ to the vertex set $\V$, the expression in (\ref{Robin2}) has to be augmented, for the $i^\text{th}$ copy of $\Vo$ ($i\in\left\{ 1,2 \right\}$), by adding to it the following terms coming from the connecting edges:
\begin{equation}
  - \frac{c_{v,i}(\lambda)}{s_v(\lambda)} \bar u((v,i)) + \frac{1}{s_v(\lambda)} \bar u((v,\itilde)) 
  \qquad
  i,\itilde \in\left\{ 1,2 \right\},\; i\not=\itilde.
\end{equation}
This is valid whenever $\lambda\not\in\sigma_D(A)$.
Define operators that are diagonal with respect to $\Vo$, for $i=1,2$,
\begin{equation}
  \Cf(\lambda) := \mathrm{diag} \left( \frac{1}{s_v(\lambda)} \right)\,,
  \qquad
  \Cf_i(\lambda) := \mathrm{diag} \left( \frac{c_{v,i}(\lambda)}{s_v(\lambda)} \right)\,,
\end{equation}
that is, $\left[ \Cf(\lambda) \bar u \right](v) = s_v(\lambda)^{-1} \bar u(v)$, and similarly for $\Cf_i(\lambda)$.
Then the operator $\Af(\lambda)$ has the following block form with respect to the decomposition $\V=\Vo\sqcup\Vo$:
\begin{equation}\label{Alambda}
  \Af(\lambda) 
\equiv
\renewcommand{\arraystretch}{2}
\left[
\begin{array}{cc}
  \Afo(\lambda) & 0 \\
  0  &    \Afo(\lambda)
\end{array}
\right]
+
\renewcommand{\arraystretch}{2}
\left[
\begin{array}{cc}
  -\Cf_1(\lambda) & \Cf(\lambda) \\
  \Cf(\lambda)  &  -\Cf_2(\lambda)
\end{array}
\right]
.
\end{equation}
Under the Floquet transform, this operator becomes
\begin{equation}\label{Alambdaz}
  \hat\Af(\lambda,z)
\equiv
\renewcommand{\arraystretch}{2}
\left[
\begin{array}{cc}
  \hat\Afo(\lambda,z) & 0 \\
  0  &  \hat\Afo(\lambda,z)
\end{array}
\right]
+
\renewcommand{\arraystretch}{2}
\left[
\begin{array}{cc}
  -\Cf_1(\lambda) & \Cf(\lambda) \\
  \Cf(\lambda)  &  -\Cf_2(\lambda)
\end{array}
\right].
\end{equation}
The operator
\begin{equation*}
\Gf(\lambda)
=
\renewcommand{\arraystretch}{1.7}
\left[
\begin{array}{cc}
  -\Cf_1(\lambda) & \Cf(\lambda) \\
  \Cf(\lambda)  &  -\Cf_2(\lambda)
\end{array}
\right]  
\end{equation*}
is the ``energy-dependent coupling operator".
It does not depend on $z$ because the connecting edges do not connect vertices in two different $\ZZ^n$-translates of a fundamental domain $W$.  When considered as being applied to functions $u|_{\V(W)}$ restricted to the vertices in $W\!$, they become finite diagonal matrices with respect to the natural basis of complex functions on the finite set $\V(W) = \V(\Wo)\sqcup\V(\Wo)$.

\section{Coupling by symmetric edges}\label{sec:symmetric}

When the potentials $q_v$ on the connecting edges of the bilayer graph $\G$ are symmetric about their midpoints, $A$ has reflectional symmetry.  This means that $A$ commutes with the reflection in $\G$ which maps each copy of $\Go$ to the other and reflects each connecting edge about its center.  Thus $A$ is invariant on the eigenspaces $\Hh_+$ and $\Hh_-$ of this reflection; the former consists of functions that are even with respect to the reflection (eigenvalue $1$), and the latter consists of odd functions (eigenvalue $-1$).  With respect to the orthogonal decomposition $L^2(\G)=\Hh_+\oplus\Hh_-$, the operator $A$ has a decomposition
\begin{equation}
  A = A_+ + A_-\,.
\end{equation}
This kind of symmetric periodic quantum graph is studied in~\cite[\S3.2]{Shipman2014}.  Both $A_+$ and $A_-$ are unitarily equivalent to quantum-graph operators $\tilde A_+$ and $\tilde A_-$ defined on the metric graph $\G_d$ that is ``half'' of $\G$, consisting of $\Go$ plus a dangling edge attached to each vertex.  For $v\in\V(\Go)$, this dangling edge is half of the connecting edge~$e_v$.
By restricting functions in $\dom(A_+)=\dom(A)\cap\Hh_+$ to $\G_d$, one obtains the domain of $\tilde A_+$, which possesses the Neumann boundary condition $du/dx=0$ at the free vertices of the dangling edges.  And the restriction of functions in $\dom(A_-)=\dom(A)\cap\Hh_-$ to $\G_d$ is the domain of $\tilde A_-$; it possesses the Dirichlet condition $u=0$ at the free vertices.

This decomposition of $A$ renders its Floquet surface canonically reducible.  Let $\Af_+(\lambda)$ and $\Af_-(\lambda)$ be the reduced $\lambda$-dependent combinatorial operators for $A_+$ and $A_-$ and $\Af_+(\lambda,z)$ and $\Af_-(\lambda,z)$ their Floquet transforms.  The Floquet surface for $A$ at energy $\lambda$ is just
\begin{equation}
  \Phi_\lambda = \left\{ z\in(\CC^*)^n :\; \right( \det\Af_+(\lambda,z) \left) \right( \det\Af_-(\lambda,z) \left) \,= 0 \right\}\,,
\end{equation}
which is reducible into the union of the Floquet surfaces of the two quantum-graph operators $\tilde A_+$ and $\tilde A_-$.

\section{Coupling within an asymmetry class}\label{sec:asymmetric1}

If the potentials connecting two copies of the periodic quantum graph $\Go$ are not symmetric, it is no longer possible to decompose the bilayer-graph Schr\"odinger operator $A$ into two components unitarily equivalent to decorated quantum graphs, as described in the previous section.  But if the potentials belong to same asymmetry class, it turns out that the reduced $\lambda$-dependent combinatorial operator $\Af(\lambda)$ can be decomposed.  Its invariant subspaces depend on $\lambda$, so this reduction does not proceed from a reduction of $A$.  Nevertheless, the energy-dependent reduction does lead to a factorization of the determinant $D(\lambda,z)$ as a Laurent polynomial in $z=(z_1,\dots,z_n)$ and therefore to the reducibility of the Floquet surface for all energies~$\lambda$.

\smallskip

The simple mechanism for generating two components of the Floquet surface when the connecting edges are in the same asymmetry class is discussed in the following two paragraphs.  The arguments are made rigorous in this section and the following one.  One tries to construct a Floquet mode $f$ for a bilayer graph at a given $\lambda$ by first defining it on any edge connecting two copies of a given vertex $v$ of the single layer (call them $v_1$ and $v_2$ in the two layers) to be a special solution of $(-d^2/dx^2 + q_v(x) - \lambda)u_v(x)=0$.  The solution $u_v(x)$ is required to be such that the vector $[u_v(v_1),u_v(v_2)]^t$ of endpoint values is an eigenvector of the DtN matrix $G_v(\lambda)$ with eigenvalue $\mu_v$ (it is defined up to a constant multiple).  This ensures that the same relation between value and derivative occurs at each endpoint $v_1$ and $v_2$, that is, $u_v'(v_1)=\mu_v u(v_1)$ and $u_v'(v_2)=\mu_v u(v_2)$.  Now the Robin vertex condition~\ref{Robin} for the bilayer graph can be reduced to a Robin condition for each single layer, with $\alpha$ replaced by $\alpha-\mu_v$ applied to both copies $v_1$ and $v_2$ of the vertex $v$.
This modified single-layer graph has its own Floquet surface $\tilde D(\lambda,z)=0$.
The Floquet mode $f$ we are seeking, restricted to each of the two copies of the single layer, will be equal to a constant multiple of some Floquet mode for the modified single layer, for a value of $z$ satisfying $\tilde D(\lambda,z)=0$, and the constant multiple will be different in the two layers, say $c_1$ and $c_2$.
In order for the modes in the two layers to be compatible with the functions $u_v(x)$ on the connecting edges, the vector of constants $[c_1,c_2]^t$ must be proportional to each of the vectors of endpoint values $[u_v(v_1),u_v(v_2)]^t$.  This means that $[c_1,c_2]^t$ must be a common eigenvector for all the matrices $G_v(\lambda)$, over all vertices $v$ of~$\Go$.  So the Floquet surface $\tilde D(\lambda,z)=0$ for the modified single layer is one component of the Floquet surface of the bilayer graph.  The other component is obtained by using the other eigenvector of $G_v(\lambda)$.

This construction of the components of the Floquet surface is therefore possible whenever the matrices $G_v(\lambda)$, for all vertices $v$ in $\Go$, are simultaneously diagonalizable, or whenever they commute with each other.  This happens for all $\lambda$ exactly when all the A-functions $a_v(\lambda)$ are identical.  To see this, write the DtN map~as
\begin{equation*}
  s_v(\lambda) G_v(\lambda) \,=\,
  -\renewcommand{\arraystretch}{1.2}
\left[
\begin{array}{cc}
  \!\!b_v(\lambda) & 0 \\
  0 & b_v(\lambda)\!\!
\end{array}
\right]
  +\renewcommand{\arraystretch}{1.2}
\left[
\begin{array}{cc}
  \!\!-a_v(\lambda) & 1 \\
  1 & a_v(\lambda)\!\!
\end{array}
\right],
\end{equation*}
and denote
\begin{equation}
  N_v(\lambda) = \left[
\begin{array}{cc}
  \!\!-a_v(\lambda) & 1 \\
  1 & a_v(\lambda)\!\!
\end{array}
\right].
\end{equation}
Two matrices $G_v(\lambda)$ and $G_w(\lambda)$ commute whenever $N_v(\lambda)$ and $N_w(\lambda)$ commute, and a calculation shows that this happens if and only if $a_v(\lambda)=a_w(\lambda)$, as stated in part (4) of Theorem~\ref{thm:a}.

\subsection{Main theorem: Reducibility of the Fermi surface}

Let the potentials of all of the connecting edges of a bilayer graph $\G$ possess the same A-function,
\begin{equation}
  a_v(\lambda) \,=\, a(\lambda)
  \qquad
  \forall\, v \in \V(\Wo),
\end{equation}
so that
\begin{equation}\label{sG}
  s_v(\lambda) G_v(\lambda) \,=\,
  -\renewcommand{\arraystretch}{1.2}
\left[
\begin{array}{cc}
  \!\!b_v(\lambda) & 0 \\
  0 & b_v(\lambda)\!\!
\end{array}
\right]
  +\renewcommand{\arraystretch}{1.2}
\left[
\begin{array}{cc}
  \!\!-a(\lambda) & 1 \\
  1 & a(\lambda)\!\!
\end{array}
\right],
\end{equation}
which can be decomposed by means of the spectral resolution of the second matrix
\begin{equation}
  s_v(\lambda) G_v(\lambda) \,=\, -b_v(\lambda) I_2 + \mu P_{\!\mu} - \mu P_{\!-\mu}
\end{equation}
with $\mu^2 = a(\lambda)^2+1$.

When the potentials on the connecting edges are symmetric and therefore $a(\lambda)=0$, one has $\mu^2=1$ and the eigenvalues become constant.
The Riemann surface $\tilde S$ separates into two copies of the $\lambda$-plane, one with $\mu=1$ and one with $\mu=-1$.
Since the eigen-projections $P_1$ and $P_{\!-1}$ are now constant, they provide a decomposition of the full quantum-graph operator $A$.  This is evidenced in the $\mu$-dependence of the diagonal matrices $\Df^\pm(\lambda)$ in the following theorem.

\begin{theorem}\label{thm:asymmetric1}  %%%%%%%%%%%%%%%%
  Let $(\Go,\Ao)$ be a periodic quantum graph and $(\G,A)$ a bilayer quantum graph obtained by joining two copies of $(\Go,\Ao)$ according to Definition~\ref{def:bilayer}. Let the potentials of the connecting edges lie in the asymmetry class associated with a single A-function $a(\lambda)$.
  
  The Floquet surface $\Phi_\lambda$ of $(\G,A)$ is reducible for all $\lambda\notin\sigma_D(A)$ into the union of two surfaces,
\begin{equation}
  \Phi_\lambda =  \Phi^+_\lambda \cup \Phi^-_\lambda,
\end{equation}
in which
\begin{equation}
  \Phi^\pm_\lambda = \left\{ z\in(\CC^*)^n : \det\big( \hat\Afo(\lambda,z) + \Df^\pm(\lambda) \big) = 0 \right\},
\end{equation}
the diagonal matrices $\Df^\pm(\lambda)$ are defined by
\begin{equation}
  \Df^\pm(\lambda) := \mathrm{diag}_{v\in\Wo} \left( \frac{-b_v(\lambda)\pm\mu}{s_v(\lambda)} \right)\,,
\end{equation}
with $\mu^2 = a(\lambda)^2+1$, and\, $\hat\Afo(\lambda,z)$ is the Floquet transform of the combinatorial reduction of $\Ao-\lambda I$.
Both components $\Phi^+_\lambda$ and $\Phi^-_\lambda$ are nonempty subsets of $(\CC^*)^n$ for all but possibly a discrete set of values of $\lambda$.
\end{theorem}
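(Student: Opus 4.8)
The hypothesis $a_v(\lambda)\equiv a(\lambda)$ is precisely what lets one diagonalize the two-dimensional ``layer'' space once and for all, uniformly in the vertex $v\in\V(\Wo)$ and in the Floquet parameter $z$, which in turn factors $D(\lambda,z)=\det\hat\Af(\lambda,z)$ as a Laurent polynomial in $z$. First I would recast the block form (\ref{Alambdaz}) tensorially, writing $\CC^{\V(W)}=\CC^{\V(\Wo)}\otimes\CC^2$ with the second factor the layer index. The diagonal part is $\hat\Afo(\lambda,z)\otimes I_2$; and because $c_{v,1}=b_v+a_v=b_v+a$ and $c_{v,2}=b_v-a$, the energy-dependent coupling block $\Gf(\lambda)$ becomes $-B(\lambda)\otimes I_2+\Cf(\lambda)\otimes N(\lambda)$, where $B(\lambda)=\mathrm{diag}_v(b_v/s_v)$ and $\Cf(\lambda)=\mathrm{diag}_v(1/s_v)$ act on $\CC^{\V(\Wo)}$ and $N(\lambda)$ is the matrix from (\ref{sG}). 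Thus
\[
  \hat\Af(\lambda,z)=\bigl(\hat\Afo(\lambda,z)-B(\lambda)\bigr)\otimes I_2+\Cf(\lambda)\otimes N(\lambda).
\]

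For $\lambda$ with $\mu^2:=a(\lambda)^2+1\neq0$ I would then substitute $I_2=P_\mu+P_{-\mu}$ and $N(\lambda)=\mu P_\mu-\mu P_{-\mu}$, using the eigenprojections $P_{\pm\mu}$ of $N(\lambda)$ from Section~\ref{sec:riemannsurface}, which are independent of both $v$ and $z$. Bilinearity of $\otimes$ together with $\Df^\pm(\lambda)=-B(\lambda)\pm\mu\,\Cf(\lambda)$ gives
\[
  \hat\Af(\lambda,z)=\bigl(\hat\Afo(\lambda,z)+\Df^+(\lambda)\bigr)\otimes P_\mu+\bigl(\hat\Afo(\lambda,z)+\Df^-(\lambda)\bigr)\otimes P_{-\mu}.
\]
Conjugating the $\CC^2$ factor by a matrix taking the standard basis to an eigenbasis of $N(\lambda)$ makes the right-hand side block-diagonal with diagonal blocks $\hat\Afo+\Df^+$ and $\hat\Afo+\Df^-$, so
\[
  D(\lambda,z)=\det\bigl(\hat\Afo(\lambda,z)+\Df^+(\lambda)\bigr)\,\det\bigl(\hat\Afo(\lambda,z)+\Df^-(\lambda)\bigr)=:D^+(\lambda,z)\,D^-(\lambda,z),
\]
each factor a Laurent polynomial in $z$ because $\Df^\pm$ does not depend on $z$; this is exactly $\Phi_\lambda=\Phi^+_\lambda\cup\Phi^-_\lambda$. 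The exceptional set $\{\mu=0\}=\{a(\lambda)=\pm i\}$ is discrete, being the zero set of the entire function $a^2+1$, which is not identically zero since $a$ is real on $\RR$ by Theorem~\ref{thm:a}(2). At such $\lambda$, $N(\lambda)$ is a nilpotent single Jordan block; repeating the bilinear manipulation in a Jordan basis makes $\hat\Af(\lambda,z)$ block upper-triangular with both diagonal blocks $\hat\Afo(\lambda,z)-B(\lambda)=\hat\Afo(\lambda,z)+\Df^\pm(\lambda)$, so the factorization $D=D^+D^-$ (with $D^+=D^-$) survives there too. This proves reducibility for every $\lambda\notin\sigma_D(A)$.

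Finally, to get the nonemptiness clause one must show that for all but discretely many $\lambda$ neither $D^\pm(\lambda,\cdot)$ is a single nonzero Laurent monomial, the only way a Laurent polynomial can have empty zero set in $(\CC^*)^n$. The finitely many $z$-coefficients of $D^\pm(\lambda,z)$ are meromorphic in $\lambda$, so if $D^\pm(\lambda,\cdot)$ were a monomial for $\lambda$ in a set with an accumulation point, then (one exponent must recur for infinitely many of these $\lambda$) by the identity theorem $D^\pm(\lambda,z)=g(\lambda)z^{\beta}$ for one fixed $\beta$ and all $\lambda$; it suffices to exclude this. If $a^2+1$ is not the square of an entire function, the Riemann surface $\Ss=\{\mu^2=a^2+1\}$ is connected and its deck involution $\mu\mapsto-\mu$ interchanges $D^+$ and $D^-$, so $D^+$ is monomial-in-$z$ for all $\lambda$ if and only if $D^-$ is, and then $D=D^+D^-$ would be monomial-in-$z$ for all $\lambda$ --- impossible, since $A$ is self-adjoint with nonempty spectrum and hence $\Phi_\lambda\neq\emptyset$ (indeed meets $\TT^n$) for $\lambda$ in the spectrum. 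In the remaining special case $a^2+1=h^2$ with $h$ entire, the $\Df^\pm=-B\pm h\Cf$ are single-valued and one argues directly that $\det(\hat\Afo(\lambda,z)-B(\lambda)+h(\lambda)\Cf(\lambda))$ is not a monomial for generic $\lambda$: its true $z$-dependence comes from off-diagonal entries of $\hat\Afo$ produced by edges of $\Go$ that cross period boundaries (such edges exist because $\Go$ is genuinely $n$-periodic), and these contribute monomials $z^\beta$, $\beta\neq0$, with coefficients untouched by the $z$-constant diagonal term. This last non-degeneracy --- excluding pathological cancellations in $\det(\hat\Afo+\Df^\pm)$ --- is the delicate point and the main obstacle, and it is why the nonemptiness is claimed only off a discrete $\lambda$-set while the factorization itself is unconditional for $\lambda\notin\sigma_D(A)$.
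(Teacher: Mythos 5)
Your factorization argument is essentially the paper's: you rewrite $\hat\Af(\lambda,z)$ as a sum of tensor products involving the trace-free matrix $N(\lambda)$, exploit the hypothesis of a common A-function to diagonalize $N$ uniformly over all vertices of $\Wo$ and all $z$, and read $D=D^+D^-$ off the resulting block-diagonal form. The explicit Jordan-basis treatment at the ramification points $\mu=0$ is a nice touch the paper leaves implicit, though it changes nothing of substance.

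Where you genuinely depart from the paper is the nonemptiness argument. The paper observes that, for each multi-index $r$, the coefficients $a^\pm_r(\lambda)$ of $z^r$ in $D^\pm$ are the two branches of a single meromorphic function $a_r$ on the Riemann surface $\Ss$, asserts that some $a_r$ with $r\neq 0$ does not vanish identically, and concludes that $D^\pm$ has a nonconstant $z$-term away from a discrete $\lambda$-set. You instead reduce nonemptiness to excluding $D^\pm$ from being a pure monomial, use the deck involution of $\Ss$ (when connected) to show that $D^+$ is identically a $z$-monomial if and only if $D^-$ is, and derive a contradiction from $\Phi_\lambda\cap\TT^n\neq\emptyset$ for $\lambda\in\sigma(A)$. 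Your version is in fact tighter in one respect: a Laurent polynomial has empty zero set in $(\CC^*)^n$ exactly when it is a monomial, and what the paper's argument directly establishes is only that $D^\pm$ is not $z$-independent, which does not by itself exclude single nonconstant monomials.

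One simplification you overlooked: the case $a^2+1=h^2$ with $h$ entire, which you flag as the delicate residual case, cannot occur unless $a\equiv0$. Indeed $(a-h)(a+h)=-1$ with both factors entire and non-vanishing; Hadamard's factorization gives $a-h=e^{g}$, $a+h=-e^{-g}$ with $g$ a polynomial, so $2a=e^{g}-e^{-g}$, and since $a$ has exponential order $1/2$, $g$ must be constant, hence $a$ is constant; the asymptotics $a(\lambda)\to0$ as $\lambda\to+\infty$ along the real axis then force $a\equiv0$. Thus your ``Case~2" is precisely the symmetric case of Section~\ref{sec:symmetric}, where $D^\pm$ are the dispersion functions of the two decorated single-layer operators $\tilde A_\pm$; these are self-adjoint with nonempty spectrum, so your own spectral argument applied to each of them individually closes this case at once, and no auxiliary non-degeneracy claim about the off-diagonal entries of $\hat\Afo$ is needed.
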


\begin{proof}  %%%%%%%%%%%%%%%%%%%%%%%%%
The Floquet surface of $(\Gamma,A)$ is the zero set of the determinant of the matrix in~(\ref{Alambdaz}).  That matrix can be expressed as
\begin{align}
  \Af(\lambda,z) &\;=\;
  I_2\otimes\hat\Afo(\lambda,z) - I_2\otimes\Cf(\lambda)\Bf(\lambda) + N(\lambda)\otimes \Cf(\lambda) \\
        &\;=\;
  I_2\otimes\Cf(\lambda)
  \left[
  I_2\otimes \left( \Cf(\lambda)^{-1}\hat\Afo(\lambda,z) - \Bf(\lambda) \right)
     + N(\lambda)\otimes I_m  
  \right],
\end{align}
in which $m$ is the number of vertices in a fundamental domain of $\Go$ and
\begin{equation}
  \Bf(\lambda) := \mathrm{diag}\left( b_v(\lambda) \right),
  \qquad
  N(\lambda) =
  \renewcommand{\arraystretch}{1.1}
   \left[
   \begin{array}{cc}
        -a(\lambda) & 1 \\
        1 & a(\lambda)
   \end{array}
   \right],
\end{equation}
with $a(\lambda)$ being the common A-function for the potentials $q_v$ of all the connecting edges.  The eigenvalues of $N(\lambda)$ are $\pm\mu$, and thus by conjugating $N(\lambda)$, the determinant of $\Af(\lambda,z)$ can be expressed as
\begin{align}
   \det \Af(\lambda,z)
   &=
   \det\Cf(\lambda)^2
      \det\left[ I_2\otimes\left( \Cf(\lambda)^{-1}\Afo(\lambda,z)-\Bf(\lambda) \right) +
        \renewcommand{\arraystretch}{0.9} \left[\hspace{-5pt} \begin{array}{cc} \mu\!\!&0\\0\!\!&-\mu \end{array} \hspace{-3pt}\right]\otimes I_m \right] \\ \label{dets}
   &= 
   \det\Cf(\lambda)^2
    \det\left( \Cf(\lambda)^{-1}\Afo(\lambda,z)-\Bf(\lambda)+\mu I_m \right)
    \det\left( \Cf(\lambda)^{-1}\Afo(\lambda,z)-\Bf(\lambda)-\mu I_m \right) \\
   &= 
    \det\left( \Afo(\lambda,z) + \Cf(\lambda)\left(-\Bf(\lambda)+\mu I_m\right) \right)
    \det\left( \Afo(\lambda,z) + \Cf(\lambda)\left(-\Bf(\lambda)-\mu I_m\right) \right).
\end{align}
The matrices $\Df^\pm(\lambda)$ in the statement of the theorem are equal to
$\Cf(\lambda)\left( -\Bf(\lambda)\pm\mu I_m \right)$, so that
\begin{equation}\label{det=detdet}
  \det \hat\Af(\lambda,z) =
  \det \left( \hat\Afo(\lambda,z) + \Df^+(\lambda) \right)
  \det \left( \hat\Afo(\lambda,z) + \Df^-(\lambda) \right)\,.
\end{equation}
Each of these factors is a Laurent polynomial in $z=(z_1,\cdots,z_n)$ because the matrix $\hat\Afo(\lambda,z)$ has entries that are Laurent polynomials.
Since the Floquet surface for $A$ at energy $\lambda$ is
\begin{equation}
  \Phi_\lambda \,=\,
  \left\{ z\in (\CC^*)^n : \det \hat\Af(\lambda,z) = 0 \right\},
\end{equation}
the Floquet surface is reducible into the two components stated in the theorem.
Denote the determinant factors in~(\ref{det=detdet})by $D^+(\lambda,z)$ and $D^-(\lambda,z)$.

Let $a^+_{r_1,\dots,r_n}(\lambda)$ and $a^-_{r_1,\dots,r_n}(\lambda)$ be the coefficients of $D^+(\lambda,z)$ and $D^-(\lambda,z)$ for a given monomial $z_1^{r_1}\cdots z_n^{r_n}$.   One of these must be nonzero for some choice of $r_1,\dots,r_n$, not all zero.
Observe that $D^+(\lambda,z)$ and $D^-(\lambda,z)$ are distinguished only by the sign of $\mu=\sqrt{a(\lambda)^2+1\,}$, so that also $a^+_{r_1,\dots,r_n}(\lambda)$ and $a^-_{r_1,\dots,r_n}(\lambda)$ are distinguished only by the sign of $\mu$.
This means that $a^+_{r_1,\dots,r_n}(\lambda)$ and $a^-_{r_1,\dots,r_n}(\lambda)$ are actually equal to the two branches over the $\lambda$-plane of a single analytic function $a_{r_1,\dots,r_n}(\lambda)$ on the Riemann surface $\Ss$.  This function vanishes only at a discrete set of points.  Therefore, the two components of the Floquet surface are nonempty.
\end{proof}

That the two components of $\Phi_\lambda$ are distinct for all but a discrete set of values of $\lambda$ follows from the determinants of the matrices $\Cf(\lambda)^{-1}\Afo(\lambda,z)-\Bf(\lambda)\pm\mu I_m$ depending on the sign in front of $\mu$.  This is apparently a generic condition.

\subsection{Energy-dependent decomposition of a bilayer graph}

The discussion at the beginning of section~\ref{sec:asymmetric1} points to an energy-dependent decomposition of the combinatorial reduction of a periodic bilayer quantum graph as the mechanism for reducibility of its Floquet surface.
This section elucidates that mechanism and results in an alternative proof of Theorem~\ref{thm:asymmetric1}.

As described in section~\ref{sec:floquet}, $\hat\Af(\lambda,z)$ acts in the space $\CC^{\V(W)}$.  The block form~(\ref{Alambdaz}) is with respect to the decomposition
\begin{equation}
  \CC^{\V(W)} = \CC^{\V(\Wo)} \times \CC^{\V(\Wo)} \cong \CC^2 \otimes \CC^{\V(\Wo)}.
\end{equation}
This identification with the tensor product $\CC^2 \otimes \CC^{\V(\Wo)}$ provides a convenient way to express the operator~$\hat\Afo(\lambda,z)$.
The block operator $\Gf(\lambda)$ defined in~(\ref{Alambdaz}) has diagonal blocks, so it is a sum of tensor products.  Let $E_v$ denote the projection in $\CC^{\V(\Wo)}$ to the $v$-component (that is, $[E_v(u)](w) = u(v)\delta_{vw}$, where $\delta_{vw}$ is the Kronecker symbol); then
\begin{equation}\label{calG}
  \Gf(\lambda) = \sum_{v\in\V(\Wo)} G_v(\lambda) \otimes E_v\,.
\end{equation}
The operators $\Af(\lambda)$ and $\hat\Af(\lambda,z)$ can now be written
\begin{eqnarray}
  \Af(\lambda) &=& I_2 \otimes \Afo(\lambda) + \sum_{v\in\V(\Wo)} G_v(\lambda) \otimes E_v\,, \label{Aflambda} \\
  \hat\Af(\lambda,z) &=& I_2 \otimes \hat\Afo(\lambda,z) + \sum_{v\in\V(\Wo)} G_v(\lambda) \otimes E_v\,,
\end{eqnarray}
in which $I_2$ is the identity $2\times2$ matrix.

The projections $P_\mu$ are independent of the vertex $v$ since $a_v(\lambda)=a(\lambda)$ for all $v\in\Wo$.  Thus the coupling operator 
\begin{equation}\label{Gf}
  \Gf(\lambda) \,=\, \sum_{v\in\V(\Wo)} G_v(\lambda) \otimes E_v
\end{equation}
in (\ref{Alambdaz},\ref{calG}) is resolved with respect to the projections $P_{\!\pm\mu}$ as
\begin{equation}\label{Gf2}
\begin{split}
  \Gf(\lambda) &=\, \sum_{v\in\V(\Wo)}  \frac{1}{s_v(\lambda)}
  \left( -b_v(\lambda) I_2 + \mu \big(P_{\!\mu}(\lambda)-P_{\!-\mu}(\lambda)\big) \right)
       \otimes E_v  \\
       &=\; P_{\!\mu} \otimes\! \sum_{v\in\V(\Wo)} \frac{-b_v(\lambda)+\mu}{s_v(\lambda)} E_v
          \;+\; P_{\!-\mu} \otimes\! \sum_{v\in\V(\Wo)} \frac{-b_v(\lambda)-\mu}{s_v(\lambda)} E_v\,.
\end{split}
\end{equation}
The resolution of $\Af(\lambda)$ with respect to the projections $P_{\!\pm\mu}$ is obtained from this resolution of $\Gf(\lambda)$ and equation (\ref{Aflambda}) (using the relation $I_2 = P_{\!\mu}+P_{\!-\mu}$):
\begin{equation}\label{Gf3}
  \Af(\lambda) \;=\; 
       P_{\!\mu} \otimes\!
       \left( \Afo(\lambda) + \sum_{v\in\V(\Wo)} \frac{-b_v(\lambda)+\mu}{s_v(\lambda)} E_v \right)
          \;+\; P_{\!-\mu} \otimes\! 
       \left( \Afo(\lambda) + \sum_{v\in\V(\Wo)} \frac{-b_v(\lambda)-\mu}{s_v(\lambda)} E_v \right).
\end{equation}
To write $\Af(\lambda)$ in block form with respect to the projections $P_{\!\pm\mu}$ and $E_v$, set
\begin{equation}\label{Df}
  \Df^+(\lambda) := \mathrm{diag}_{v\in\Wo} \left( \frac{-b_v(\lambda)+\mu}{s_v(\lambda)} \right),
  \qquad
  \Df^-(\lambda) := \mathrm{diag}_{v\in\Wo} \left( \frac{-b_v(\lambda)-\mu}{s_v(\lambda)} \right),
\end{equation}
and the block form is
\begin{equation}\label{Alambdablock}
  \Af(\lambda)
\equiv
\renewcommand{\arraystretch}{2}
\left[
\begin{array}{cc}
  \Afo(\lambda) + \Df^+(\lambda) & 0 \\
  0  &  \Afo(\lambda) + \Df^-(\lambda)
\end{array}
\right].
\end{equation}
Since the connecting edges do not connect vertices in two distinct $\ZZ^n$-translates of a fundamental domain $W$ of $\G$, $\hat\Df(\lambda,z)= \Df(\lambda)$ as an operator in $\CC^{\V(W)}$, and thus
\begin{equation}\label{Alambdablockz}
  \hat\Af(\lambda,z)
\equiv
\renewcommand{\arraystretch}{2}
\left[
\begin{array}{cc}
  \hat\Afo(\lambda,z) + \Df^+(\lambda) & 0 \\
  0  &  \hat\Afo(\lambda,z) + \Df^-(\lambda)
\end{array}
\right].
\end{equation}
The factorization~(\ref{det=detdet}) of $\det \hat\Af(\lambda,z)$ follows from this block-diagonal form.

\subsection{Realizability of spectral components as graphs}\label{sec:realizability}

It is interesting to ask whether the components of the Floquet surface of a periodic bilayer quantum graph $(\G,A)$ can be realized as the Floquet surfaces of other, simpler, periodic bilayer quantum graphs.
More specifically, one can ask if the two operators $\Afo(\lambda) + \Df^\pm(\lambda)$ in~(\ref{Alambdablock}) can be realized as the $\lambda$-dependent combinatorial reduction on $\V(\Go)$ of a quantum graph obtained as a ``decoration" of $\Go$, that is, by attaching a dangling edge to each vertex $v$, where the edge dangling from vertex $v$ has potential $q^d_v(x)$, and these potentials commute with the $\ZZ^n$ action.  Section~\ref{sec:symmetric} describes this realization when the potentials on the connecting edges are symmetric about the midpoints.
Theorem~\ref{thm:realizability} says that such a realization of the components is not possible when the potentials are not symmetric, provided that the A-function satisfies a certain genericity condition.

\begin{condition}[genericity of the A-function]\label{def:genericity}
  The A-function $a(\lambda)$ for a potential $q\in L^2[0,1]$ satisfies the {\em genericity condition} if there exists $\lambda_0\in\CC$ such that $a(\lambda_0)^2+1=0$ and
\begin{equation}
  \int_0^1 \psi(x)^2 dx \;\not=\; 0,
\end{equation}
in which $-\psi''+(q(x)-\lambda_0)\psi=0$ and $[\psi(0),\psi(1)]$ is an eigenvector for the DtN matrix $G(\lambda_0)$ for $q$.
\end{condition}

This condition requires some explanation.  The square roots $\pm\mu$ of $a(\lambda)^2+1$ are involved in the combinatorial graph operators $\Afo(\lambda) + \Df^\pm(\lambda)$~(\ref{Alambdablock}) associated to the two components of the Floquet surface of $(\G,A)$.  The Riemann surface for $\sqrt{a(\lambda)^2+1\,}$ (section~\ref{sec:riemannsurface}) is connected if the derivative of $a(\lambda)^2+1$ at some root of $a(\lambda)^2+1$ does not vanish.  In fact, $a(\lambda)^2+1$ has infinitely many roots: If $a(\lambda)$ does not identically vanish, then neither does the entire function $a(\lambda)^2+1$.  Furthermore, $2a(\lambda)=c(\lambda)-\tilde c(\lambda)$ is of growth order $1/2$, and therefore so is $a(\lambda)^2+1$.  By the Hadamard theorem, $a(\lambda)^2+1$ is determined up to a constant multiple by its roots,
\begin{equation}
  a(\lambda)^2+1 \;=\; c \prod_{k=1}^\infty \big(1-\lambda_k^{-1} \lambda\big),
\end{equation}
and therefore must have infinitely many roots since it is not a polynomial.  Because of part (6) of Theorem~\ref{thm:a}, the genericity condition implies that $d(a(\lambda)^2+1)/d\lambda$ is nonvanishing at one of the roots of $a(\lambda)^2+1$.

\begin{theorem}[realizability of components]\label{thm:realizability}
  Let $(\G,A)$ be a periodic bilayer quantum graph associated to the periodic quantum graph $(\Go,\Ao)$ and coupling potentials $\{ q_v : v\in\V(\Go) \}$ that lie in the same asymmetry class, corresponding to A-function $a(\lambda)$.
\begin{enumerate}
\item
If the potentials $q_v$ are symmetric about the midpoints of the connecting edges, then the components
$\Afo(\lambda) + \Df^\pm(\lambda)$ of the combinatorial reduction of $(\G,A)$ can be realized as the combinatorial reductions of two different decorations of $(\Go,\Ao)$, one obtained by attaching a dangling edge to each vertex with the Dirichlet condition ($u(w)=0$) at each terminal vertex $w$, and the other obtained similarly with the Neumann condition ($u'(w)=0$) at each terminal vertex.
\item 
If the potentials $q_v$ are not symmetric about the midpoints of the connecting edges and $a(\lambda)$ satisfies the genericity Condition~\ref{def:genericity}, then neither component $\Afo(\lambda) + \Df^\pm(\lambda)$ of the combinatorial reduction of $(\G,A)$ can be realized as the combinatorial reduction of any decoration of $(\Go,\Ao)$ obtained by attaching a dangling graph to each vertex with self-adjoint vertex conditions.
\end{enumerate}
\end{theorem}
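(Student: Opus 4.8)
\medskip
\noindent\textbf{Proof proposal.}
The plan is to treat the two parts separately, since they rest on opposite features of $\mu(\lambda)=\sqrt{a(\lambda)^2+1\,}$. Part~(1) is an explicit identification: when $a(\lambda)\equiv0$ the two values $\mu=\pm1$ are constant, and I will show that the diagonal entries $\big(-b_v(\lambda)\pm1\big)/s_v(\lambda)$ of $\Df^\pm(\lambda)$ are precisely the scalar Dirichlet-to-Neumann functions of a half of the connecting edge $e_v$, terminated by a Neumann, respectively a Dirichlet, condition at its free vertex. Part~(2) is an impossibility statement: the combinatorial reduction of \emph{any} decoration of $(\Go,\Ao)$ adds to $\Afo(\lambda)$ a diagonal matrix whose entries are Weyl functions of finite self-adjoint quantum graphs and hence are single-valued and meromorphic on all of $\CC$; but under the genericity Condition~\ref{def:genericity}, $\mu(\lambda)$---and therefore every diagonal entry of $\Df^\pm(\lambda)$---has a genuine square-root branch point, so $\Afo(\lambda)+\Df^\pm(\lambda)$ cannot be such a reduction.

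For part~(1), note first that $a_v(\lambda)\equiv0$ forces $c_{v,1}=c_{v,2}$, whence $b_v(\lambda)=c_{q_v}(1,\lambda)$ and $\mu^2=1$. Split $e_v$ at its midpoint into a dangling edge of length $\tfrac12$ carrying the potential $p:=q_v|_{[0,1/2]}$ and, using $q_v(x)=q_v(1-x)$, write the transfer matrix of $e_v$ as the product of the transfer matrix of $[0,\tfrac12]$ with its mirror image. Expanding that product and using the Wronskian identity $c_ps_p'-c_p's_p=1$ at $x=\tfrac12$ yields
\begin{equation*}
  c_{q_v}(1,\lambda)=2\,c_p(\tfrac12,\lambda)\,s_p'(\tfrac12,\lambda)-1,
  \qquad
  s_{q_v}(1,\lambda)=2\,s_p(\tfrac12,\lambda)\,s_p'(\tfrac12,\lambda),
\end{equation*}
from which $\big(-b_v+1\big)/s_v=-c_p'(\tfrac12,\lambda)/s_p'(\tfrac12,\lambda)$ and $\big(-b_v-1\big)/s_v=-c_p(\tfrac12,\lambda)/s_p(\tfrac12,\lambda)$. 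These are exactly the DtN functions of the half-edge with, respectively, the Neumann and the Dirichlet condition at its free vertex. Since attaching a dangling piece to a vertex $v$ of $\Go$ with the $\delta$-type condition of $\Ao$ augmented by the new incident edge merely adds that piece's DtN function to the $v$-th diagonal entry of $\Afo(\lambda)$, the Neumann-terminated decoration realizes $\Afo(\lambda)+\Df^+(\lambda)$ and the Dirichlet-terminated one realizes $\Afo(\lambda)+\Df^-(\lambda)$. This recovers the reflection-symmetry decomposition of Section~\ref{sec:symmetric}.

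For part~(2), I would first record the structure of the reduction of a decoration: attaching a finite dangling graph $\gamma_v$ with self-adjoint vertex conditions to each $v\in\V(\Go)$ and eliminating $\gamma_v$ from the vertex condition at $v$ leaves the $\Go$-block untouched and produces $\Afo(\lambda)+\mathrm{diag}_{v}\big(m_v(\lambda)\big)$, where $m_v(\lambda)$ is the scalar DtN (Weyl) function of $\gamma_v$ relative to its attaching vertex. Because $\gamma_v$ is a finite graph with real $L^2$ potentials and self-adjoint conditions, $m_v$ is a quotient of entire functions of exponential order $\tfrac12$, hence single-valued and meromorphic on $\CC$, with all poles real and so holomorphic on $\CC\setminus\RR$. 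Suppose, for contradiction, that for one choice of sign $\Afo(\lambda)+\Df^\pm(\lambda)$ is such a reduction; comparing $(v,v)$-entries forces $m_v(\lambda)=\big(-b_v(\lambda)\pm\mu(\lambda)\big)/s_v(\lambda)$ for every $v$. By Condition~\ref{def:genericity} together with part~(6) of Theorem~\ref{thm:a}, there is $\lambda_0$ with $a(\lambda_0)^2+1=0$ and $\tfrac{d}{d\lambda}\big(a(\lambda)^2+1\big)|_{\lambda_0}\neq0$; by part~(2) of Theorem~\ref{thm:a}, $a(\lambda_0)=\pm i$ forces $\lambda_0\notin\RR$, so $s_v(\lambda_0)\neq0$ and $\lambda_0\notin\sigma_D(A)$. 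Writing $a(\lambda)^2+1=(\lambda-\lambda_0)g(\lambda)$ with $g(\lambda_0)\neq0$ near $\lambda_0$, one sees $\mu(\lambda)=(\lambda-\lambda_0)^{1/2}\sqrt{g(\lambda)\,}$ has a genuine square-root branch point: encircling $\lambda_0$ once sends $\mu\mapsto-\mu$, hence $\Df^\pm\mapsto\Df^\mp$. As $b_v,s_v$ are entire with $s_v(\lambda_0)\neq0$, the right-hand side $\big(-b_v(\lambda)\pm\mu(\lambda)\big)/s_v(\lambda)$ is then not single-valued on a punctured disk about $\lambda_0$, contradicting the fact that $m_v$ is holomorphic there; the same monodromy interchanges the two signs, so neither component is realizable. (Even allowing the vertex condition at $v$ to couple the $\Go$-edges and the $\gamma_v$-edges by a general self-adjoint condition, the reduction remains a matrix built from the entire functions $c_e,s_e$ and constant coefficients, hence meromorphic on $\CC$; since $\Afo(\lambda)$ is meromorphic while $\Df^\pm(\lambda)$ has a branch point at $\lambda_0$, the same contradiction applies.)

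The step I expect to require the most care is pinning down the claim that every decoration yields a reduction of the form $\Afo(\lambda)+\mathrm{diag}_v(m_v(\lambda))$ with each $m_v$ meromorphic on $\CC$: this means fixing the precise meaning of ``decoration'' (finite dangling graphs attached at the vertices of $\Go$, periodic under $\ZZ^n$, with the original $\delta$-type conditions augmented by the new incident edges), checking that the $\Go$-block of the reduction is unchanged and that the contribution of each $\gamma_v$ is $z$-independent because $\gamma_v$ meets only one $\ZZ^n$-translate of a fundamental domain, and invoking the standard fact that the Weyl function of a finite self-adjoint quantum graph is a quotient of entire functions of order $\tfrac12$. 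Once that is in place---and once the genericity condition has been converted, via part~(6) of Theorem~\ref{thm:a} as the paper already observes, into the statement that $a(\lambda)^2+1$ has a simple zero---both the branch-point obstruction and the transfer-matrix computation of part~(1) are routine.
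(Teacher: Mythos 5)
Your proposal is correct and follows essentially the same route as the paper: part~(2) hinges on the identical branch-point obstruction at a point $\lambda_0\notin\RR$ supplied by Condition~\ref{def:genericity} together with parts~(2) and~(6) of Theorem~\ref{thm:a}, contrasted against the meromorphy of the Weyl functions $m_v(\lambda)$. For part~(1) the paper simply refers back to Section~\ref{sec:symmetric}, whereas you spell out the half-edge transfer-matrix identities $b_v=2c_ps_p'-1$, $s_v=2s_ps_p'$ and the resulting DtN functions $-c_p'/s_p'$ and $-c_p/s_p$; this is a welcome, correct elaboration but not a different argument.
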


\begin{proof}
Part (1) is a result of section~\ref{sec:symmetric}.
Suppose that the potentials $q_v$ are not symmetric.
The operator $\Afo(\lambda)+\Df^+(\lambda)$ acts on functions $f:\V(\Gamma)\to\CC$ whose restriction to the vertices of the edge $e_v=\left\{ (v,1), (v,2) \right\}$ lies in the one-dimensional image of the eigen-projection $P_\mu$.  In other words, the domain of $\Afo(\lambda)+\Df^+(\lambda)$ consists of functions of the form $\psi_\mu \otimes g$, where $\psi_\mu\in\CC^2$ spans the image of $P_\mu$ and $g:\V(\Go)\to\CC$ is in the domain of $\Afo(\lambda)$.  This means that $f$ is completely determined by its restriction to one of the copies of~$\V(\Go)$.  Its restriction to one copy is just a scalar multiple of its restriction to the other, that multiple being equal to the ratio of the components of the eigenvector $\psi_\mu$ of $N(\lambda)$.

The self-adjoint vertex conditions of the graph dangling from vertex $v\in\V(\Go)$ imposes a $\lambda$-dependent Dirichlet-to-Neumann condition at~$v$,
\begin{equation}
  u'(v) \,=\, -m_v(\lambda)\, u(v)
\end{equation}
(the function $m_v(\lambda)$ is a Weyl-Titchmarsh M-function).  This has the effect of adding the term $m_v(\lambda)\bar u(v)$ to the expression defining $\Af(\lambda)$ (which is to be replaced by $\Afo(\lambda)$ in the present context) in~\ref{Robin2}.  The combinatorial reduction for the decorated graph, defined on $\V(\Go)$ is therefore
\begin{equation}
  \Afo(\lambda) + \Mf(\lambda),
\end{equation}
in which
\begin{equation}
  \Mf(\lambda) \,=\, \mathrm{diag}\left( m_v(\lambda) \right).
\end{equation}
The functions $m_v(\lambda)$ are meromorphic functions of $\lambda$ having all of their poles on the real axis.

By comparing $\Mf(\lambda)$ to $\Df^+(\lambda)$ (or $\Df^-(\lambda)$) as given in~(\ref{Df}), one sees that the requirement that
\begin{equation}
  \Afo(\lambda) + \Df^+(\lambda) \,=\, \Afo(\lambda) + \Mf(\lambda)
\end{equation}
implies that the functions $(-b_v(\lambda)+\mu)/s_v(\lambda)$ in~(\ref{Df}) must be meromorphic in $\CC$.
But this is possible only if the Riemann surface $\Ss$ for $\mu=\sqrt{a(\lambda)^2+1\,}$ has no ramification points over $\lambda$.

Let $\lambda_0$ be as provided by Condition~\ref{def:genericity}, and consider
$d(a(\lambda)^2+1)/d\lambda=2a(\lambda)da(\lambda)/d\lambda$ evaluated at $\lambda=\lambda_0$.  Since $a(\lambda_0)^2+1=0$, one has $a(\lambda_0)=i$, and by part (2) of Theorem~\ref{thm:a}, $\lambda_0\not\in\RR$ so that $s(\lambda_0)\not=0$.  This, together with Condition~\ref{def:genericity} and part (6) of Theorem~\ref{thm:a} yield that $da/d\lambda$ does not vanish at~$\lambda_0$.  Therefore $\partial(\mu^2-(a(\lambda)^2+1))/\partial\lambda$ does not vanish at $(\lambda=\lambda_0,\mu=0)$, whereas $\partial(\mu^2-(a(\lambda)^2+1))/\partial\mu=0$ at $(\lambda=\lambda_0,\mu=0)$, which means that $\Ss$ has a ramification point over $\lambda$ at $\lambda_0$.
This implies that $\Df^+(\lambda)$ is not analytic in $\lambda$ in $\CC\setminus\RR$, whereas $\Mf(\lambda)$ is, and therefore the component $\Afo(\lambda) + \Df^+(\lambda)$ cannot be realized as a decoration of $(\Go,\Ao)$.
This argument applies verbatim to $\Afo(\lambda) + \Df^-(\lambda)$ also.
\end{proof}

If the components of the combinatorial reduction of $(\G,A)$ have ramification points over $\lambda$, then the determinants in (\ref{dets}) will have the same ramification points, unless they just happen to be functions of~$\mu^2$ and $\lambda$.  So, barring this happening, the components of the Floquet surface cannot be realized as Floquet surfaces of decorated graphs.
It appears that, for any fixed value of the energy $\lambda$, the Floquet surface offers little insight into realizability of its components.  The obstruction to realizability is the branch points in the $\lambda$ variable.  This question is related to that of the determination of the graph operator from its dispersion function, which is understood for discrete graph Laplacians~\cite{GiesekerKnorrerTrubowitz1993}.  There is substantial literature on how the properties of metric graphs and Schr\"odinger operators on them are determined by various spectral data~\cite{AvdoninKurasov2008,BandSawickiComerford2011,ErshovaKarpenkoKiselev2016a,KiselevErshova2012,KurasovNowaczyk2005,OrenBand2011,Rueckrieme2011}.

\section{General coupling: reducibility for bipartite layers}\label{sec:asymmetric2}

When the connecting edges of a periodic bilayer quantum graph do not belong to the same asymmetry class, one can no longer expect to obtain reducibility of the Floquet surface.  Remarkably, however, it turns out that any bilayer quantum graph based on a bipartite single layer with two vertices in a fundamental domain always has reducible Floquet surface regardless of the potentials placed on the connecting edges.  A notable example of this is bilayer graphene (Fig.~\ref{fig:GrapheneW}).  Observe that the model of bilayer graphene considered here is the perfectly aligned one, known as AA-stacked.  This result does not extend to $m$-partite graphs or even bipartite when the fundamental domain has more than one vertex.

The vertices of a bipartite periodic quantum graph are divided into two sets, say green vertices and red vertices, and each edge of the graph connects vertices of two different colors.  Additionally, it will be required that any fundamental domain of the single layer graph contain exactly two vertices, one green and one red.  The set of green vertices consists of all translations of a green vertex (by the action of $\ZZ^n$), and the set of red vertices consists of all translations of a red one.

Reducibility for graphs of this kind is due to the determinant of $\hat\Afo(\lambda,z)$ being a function of a single composite Laurent polynomial of $(z_1,\dots,z_n)$. 
Not only is the Floquet surface reducible, but the Floquet modes corresponding to the two components (considered as elements of $\CC^4$ when restricted to the vertices of a fundamental domain) lie in two independent two-dimensional subspaces.  These findings are stated in Theorem~\ref{thm:graphene}.  

When the A-functions $a_v(\lambda)$ are not identical over all the vertices in a fundamental domain $\Wo$,
the form (\ref{Gf2}) of $\Gf(\lambda)$ cannot be split into two tensor products with complementary eigen-projections.  These projections depend in general on the vertex, and when there are only two vertices $v_1$ and $v_2$, one has
\begin{equation}
\begin{split}
  \Gf(\lambda) &=\, 
  s_1(\lambda)^{-1} \big[ -b_1(\lambda) + \mu_1 \left( P_{\mu_1} - P_{\!-\mu_1} \right) \big] \otimes E_1  \\
  &+\,
  s_2(\lambda)^{-1} \big[ -b_2(\lambda) + \mu_2 \left( P_{\mu_2} - P_{\!-\mu_2} \right) \big] \otimes E_2\,,
\end{split}
\end{equation}
in which $\mu_j$ for $j=1,2$ are functions on the Riemann surfaces $\left\{ (\lambda,\mu_j) : \mu_j^2 = a_j(\lambda)^2+1 \right\}$\,.

\begin{figure}
\scalebox{0.20}{\includegraphics{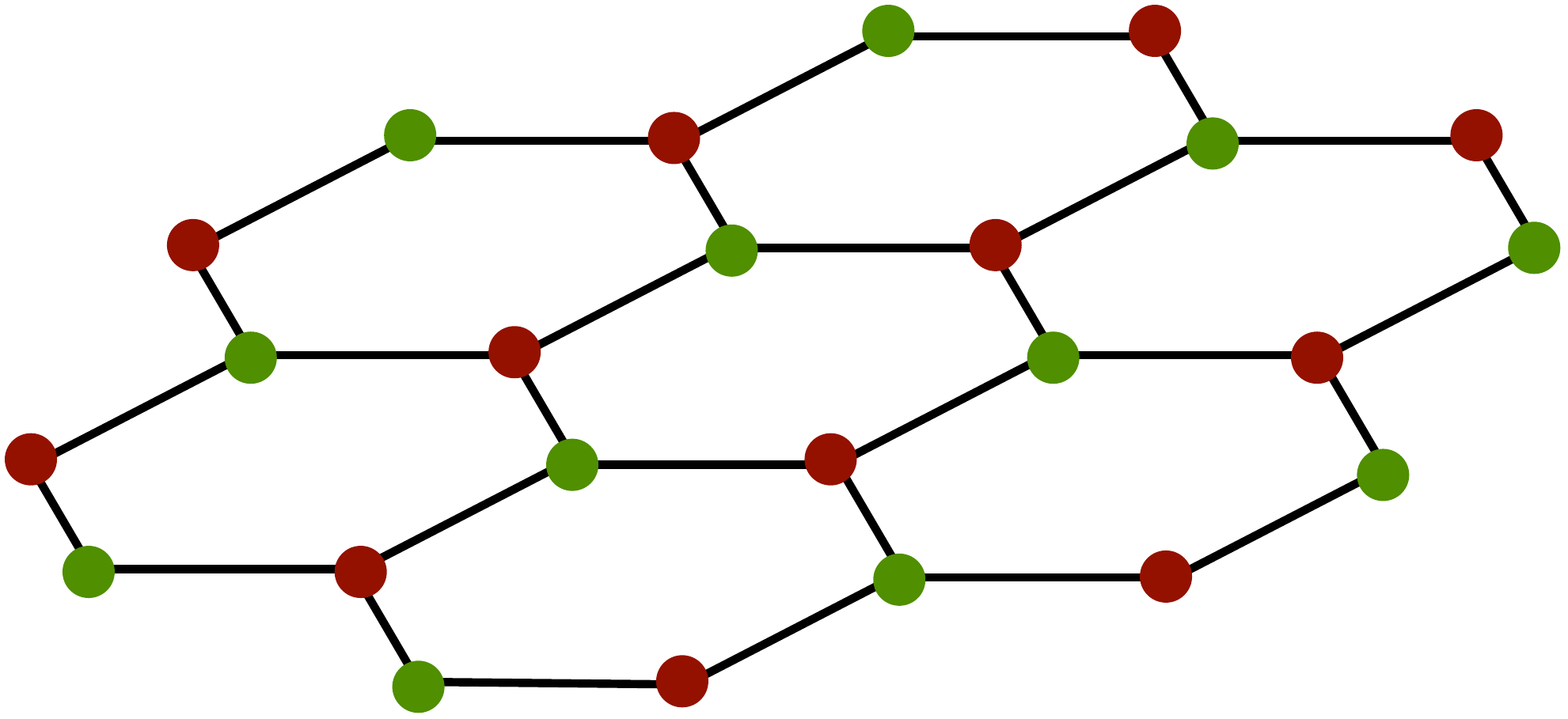}}
\hspace{0.7em}
\scalebox{0.19}{\includegraphics{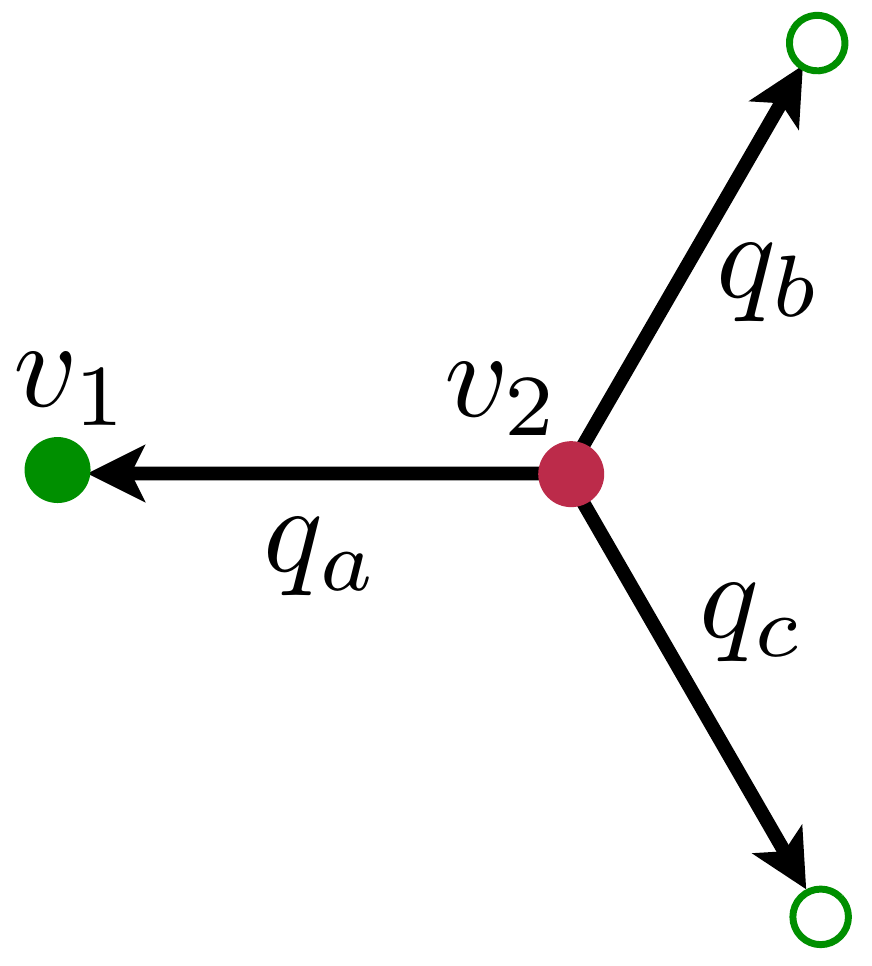}}
\hspace{3em}
\raisebox{-8pt}
{
\scalebox{0.18}{\includegraphics{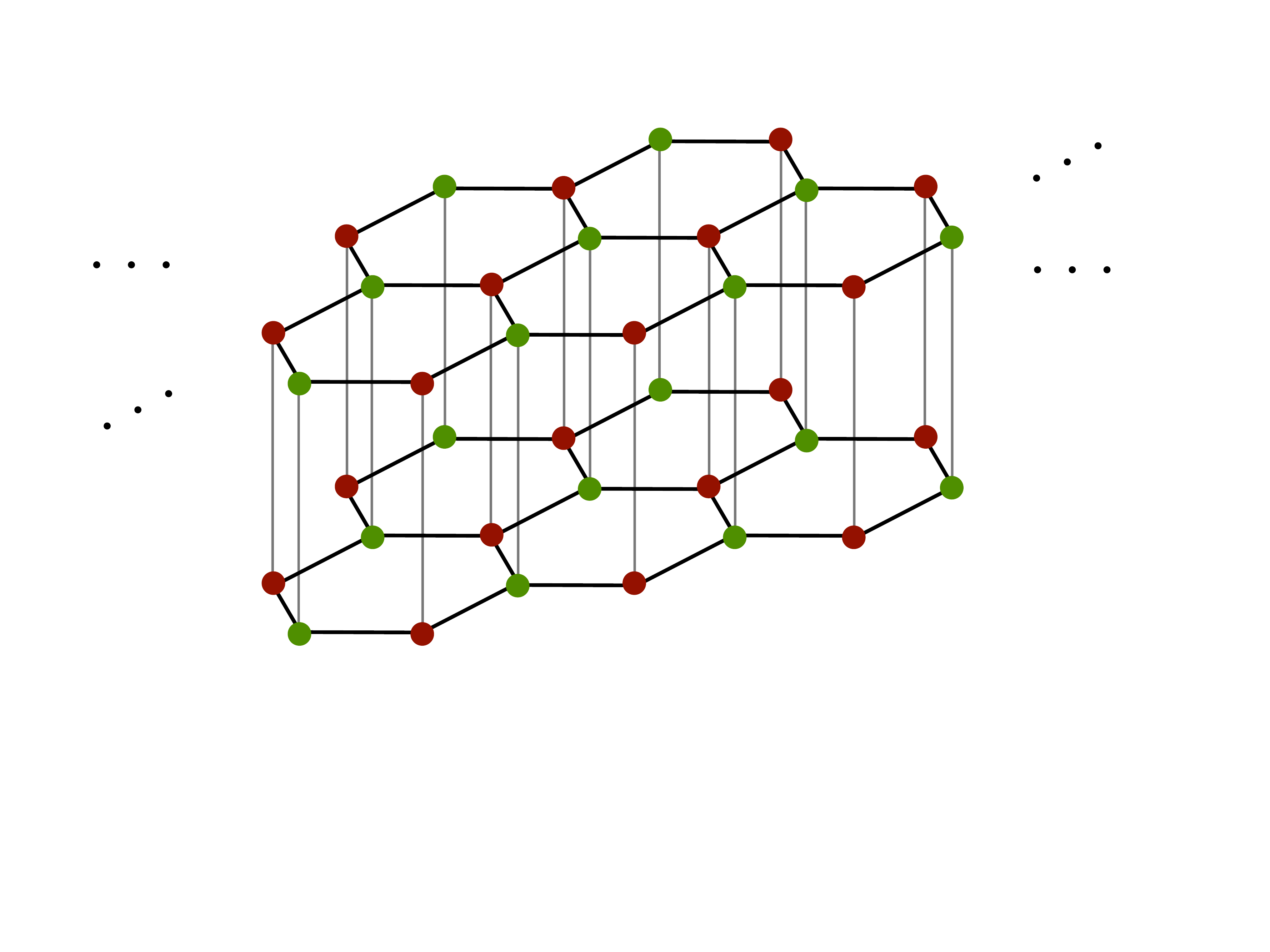}}
\hspace{1em}
\scalebox{0.17}{\includegraphics{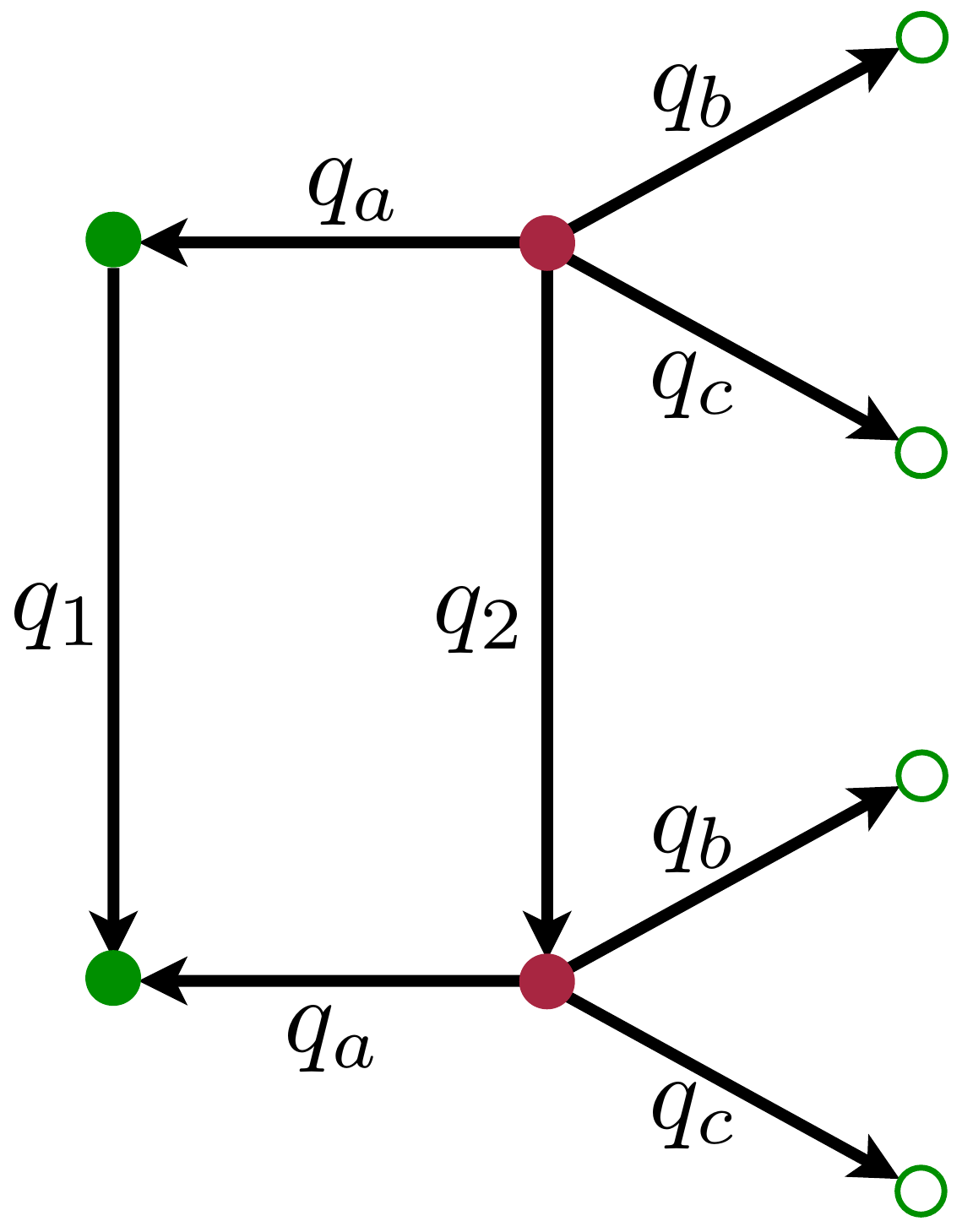}}
}
\caption{The quantum-graph model of graphene and its fundamental domain; and the associated bilayer quantum graph and its fundamental domain.}
\label{fig:GrapheneW}
\end{figure}

Let $(\Go,\Ao)$ be a bipartite periodic quantum graph with exactly one vertex of each color in each fundamental domain.  Let $(\G,A)$ be the bilayer quantum graph associated with $(\Go,\Ao)$ and any choice of potentials $q_1(x)=q_{v_1}(x)$ and $q_2(x)=q_{v_2}(x)$ on the connecting edges in a fundamental domain $W$.  Fig.~\ref{fig:GrapheneW} depicts the case of graphene.

The Floquet transform $\hat\Afo(\lambda,z)$ of the combinatorial reduction $\Afo(\lambda)$ of $\Ao$ is a $2\times2$ matrix that is meromorphic in $\lambda$ and Laurent polynomial in $z_1,\dots,z_n$.  Because the graph is bipartite, the diagonal of this matrix is independent of $z_1,\dots,z_n$, so that
\begin{equation}\label{Aohatbipartite}
  \hat\Afo(\lambda,z) \,=\,
  \renewcommand{\arraystretch}{1.3}
\left[
  \begin{array}{cc}
    m_1(\lambda) & w'(\lambda,z) \\
    w(\lambda,z) & m_2(\lambda)
  \end{array}
\right],
\end{equation}
in which $w$ and $w'$ are $\lambda$-dependent Laurent polynomials in $z_1,\dots,z_n$.

As an illustration, let $(\Go,\Ao)$ be the quantum graph model of graphene with potentials $q_a(x)$, $q_b(x)$, and $q_c(x)$ on the three edges emanating from vertex $v_1$ (see Fig.~\ref{fig:GrapheneW}) and Robin constants $\alpha_1$ and $\alpha_2$ for the two vertices $v_1$ and $v_2$.  The entries of the matrix~(\ref{Aohatbipartite}) are
\begin{eqnarray*}
  w(\lambda,z) &=& s_a(\lambda)^{-1} + s_b(\lambda)^{-1}z_1 + s_c(\lambda)^{-1}z_2\,, \\
  w'(\lambda,z) &=& s_a(\lambda)^{-1} + s_b(\lambda)^{-1}z_1^{-1} + s_c(\lambda)^{-1}z_2^{-1}\,,
\end{eqnarray*}
\begin{eqnarray*}
  m_1(\lambda) &=& -\frac{s_a'(\lambda)}{s_a(\lambda)} -\frac{s_b'(\lambda)}{s_b(\lambda)} -\frac{s_c'(\lambda)}{s_c(\lambda)} - \alpha_1, \\
  m_2(\lambda) &=& -\frac{c_a(\lambda)}{s_a(\lambda)} -\frac{c_b(\lambda)}{s_b(\lambda)} -\frac{c_c(\lambda)}{s_c(\lambda)} - \alpha_2.
\end{eqnarray*}

For each $\lambda\in\CC\setminus\sigma_D(A)$ and each $\zeta\in\CC$, define the algebraic curves
\begin{equation}
  F_\zeta(\lambda) \;:=\;
  \left\{\, z\in(\CC^*)^2 : w(\lambda,z)w'(\lambda,z) = \zeta\, \right\}.
\end{equation}

\begin{theorem}[Bilayer graphs with bipartite layers]\label{thm:graphene}  %%%%% THEOREM %%%%%

Let $(\Go,\Ao)$ be a bipartite periodic quantum graph with exactly one vertex of each color in a fundamental domain, and let $(\G,A)$ be the bilayer graphene quantum graph constructed from $(\Go,\Ao)$ as in Definition~\ref{def:bilayer}.
Given an energy $\lambda\in\CC\setminus\sigma_D(A)$,

\begin{enumerate}

\item The Floquet surface $\Phi_\lambda$ is the zero set in $(\CC^*)^2$ of a quadratic polynomial $D_\lambda(\zeta)$ of a composite variable $\zeta = P(z_1,\dots,z_n)$, where $P(z)$ is a Laurent polynomial.
$D_\lambda(\zeta)$ is the characteristic polynomial, as a function of $\zeta$, of the matrix
\begin{equation}
  R(\lambda) \,=\, B_1(\lambda)\,B_2(\lambda)
\end{equation}
in which
\begin{equation}
B_i(\lambda) \;=\;
\renewcommand{\arraystretch}{1.2}
\left[\!
\begin{array}{cc}
  \!m_i(\lambda) - s_i(\lambda)^{-1}c_i(\lambda) & s_i(\lambda)^{-1} \\
  s_i(\lambda)^{-1} & m_i(\lambda) - s_i(\lambda)^{-1}\tilde c_i(\lambda)
\end{array}
\!\right]\,.
\end{equation}

\item  $\Phi_\lambda$ is reducible into components of the form $F_\zeta$, where $\zeta$ takes on the two eigenvalues of $R(\lambda)$.  (The components coincide when $R(\lambda)$ has an eigenvalue of multiplicity two.)

\item  Whenever $R(\lambda)$ has two distinct eigenvalues, the Floquet modes of the two components of $\Phi_\lambda$ lie in two independent subspaces of the four-dimensional space of non-$L^2$ $\lambda$-eigenfunctions of the quantum-graph operator $A$ for bilayer graphene restricted to one fundamental domain of the underlying metric graph~$\Gamma$.  If $\zeta\not=0$, the subspace for $F_\zeta$ is two-dimensional, and if $\zeta=0$ it is one-dimensional.

\end{enumerate}

\end{theorem}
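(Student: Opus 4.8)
The plan is to exploit the bipartite structure by \emph{regrouping} the four vertices of a fundamental domain of $\G$ so that the two layer--copies of each vertex of $\Go$ stay together; a permutation of the basis of $\CC^{\V(W)}$ then brings $\hat\Af(\lambda,z)$ into a $2\times 2$ block form with \emph{scalar} off-diagonal blocks, whose determinant collapses to a single $2\times 2$ characteristic polynomial. Concretely, I would write $\CC^{\V(W)}=\CC^2\otimes\CC^2$ with the first factor indexing the two vertices $v_1,v_2$ of $\Wo$ and the second indexing the two layers. Because $\Go$ is bipartite with one vertex of each colour per fundamental domain, the diagonal entries $m_1(\lambda),m_2(\lambda)$ of $\hat\Afo(\lambda,z)$ in~(\ref{Aohatbipartite}) are independent of $z$, and the connecting edges couple only the two layer--copies of a single vertex. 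Reading off~(\ref{Alambdaz}),~(\ref{Gv}) and~(\ref{Aohatbipartite}), the regrouping gives
\[
  \hat\Af(\lambda,z)\;\cong\;
  \begin{bmatrix}
    B_1(\lambda) & w'(\lambda,z)\,I_2\\
    w(\lambda,z)\,I_2 & B_2(\lambda)
  \end{bmatrix},
  \qquad
  B_i(\lambda)=m_i(\lambda)\,I_2+G_{v_i}(\lambda),
\]
and since $s_i(\lambda)G_{v_i}(\lambda)=\bigl[\begin{smallmatrix}-c_i(\lambda)&1\\ 1&-\tilde c_i(\lambda)\end{smallmatrix}\bigr]$, the blocks $B_i(\lambda)$ are exactly those written in the theorem. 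Note that $w'(\lambda,z)=w(\lambda,z^{-1})$, so $P(z):=w(\lambda,z)\,w'(\lambda,z)$ is a Laurent polynomial (with coefficients meromorphic in $\lambda$), palindromic under $z\mapsto z^{-1}$.

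For part~(1): the off-diagonal blocks $w\,I_2$ and $w'\,I_2$ are scalar, hence commute with $B_1$ and $B_2$, so the block-determinant identity $\det\bigl[\begin{smallmatrix}A&B\\ C&D\end{smallmatrix}\bigr]=\det(AD-BC)$ (valid when $C$ and $D$ commute) gives
\[
  D(\lambda,z)=\det\hat\Af(\lambda,z)=\det\bigl(B_1(\lambda)B_2(\lambda)-w(\lambda,z)w'(\lambda,z)\,I_2\bigr)=\det\bigl(R(\lambda)-\zeta\,I_2\bigr),
\]
with $R(\lambda)=B_1(\lambda)B_2(\lambda)$ and $\zeta=P(z)$; this is the characteristic polynomial $D_\lambda(\zeta)$ of $R(\lambda)$, quadratic in $\zeta$. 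For part~(2), factor $D_\lambda(\zeta)=(\zeta-\zeta_1(\lambda))(\zeta-\zeta_2(\lambda))$ over the two eigenvalues of $R(\lambda)$; then $\Phi_\lambda=\{P(z)=\zeta_1\}\cup\{P(z)=\zeta_2\}=F_{\zeta_1}(\lambda)\cup F_{\zeta_2}(\lambda)$, the two pieces coinciding exactly when $\zeta_1=\zeta_2$. Each factor $P(z)-\zeta_j$ is a non-constant Laurent polynomial, hence --- being a non-monomial for all but a discrete set of $\lambda$ (for graphene $P$ itself has several monomials) --- has a nonempty zero set in $(\CC^*)^n$, so the reducibility is genuine.

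For part~(3), assume $\zeta_1\neq\zeta_2$, and write a null vector of $\hat\Af(\lambda,z)$ in the regrouped basis as $(\xi,\eta)$, where $\xi\in\CC^2$ collects the two layer--values at $v_1$ and $\eta\in\CC^2$ those at $v_2$. If $\zeta_j\neq0$ then $w(\lambda,z)\neq0$ on $F_{\zeta_j}$; the second block row gives $\xi=-w^{-1}B_2\eta$, and substituting into the first block row forces $(R(\lambda)-\zeta_j I_2)\eta=0$. Hence $\eta$ lies on the $\zeta_j$-eigenline $L_j$ of $R(\lambda)$ and $\xi\in\CC\,B_2\eta_j$, with $B_2\eta_j\neq0$ since $R(\lambda)\eta_j=\zeta_j\eta_j\neq0$. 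As $z$ ranges over $F_{\zeta_j}$ the scalar $w(\lambda,z)$ is non-constant, so these null vectors span the two-dimensional subspace $V_j=(\CC\,B_2\eta_j)\oplus L_j$, whose summands occupy the distinct vertex-slots $v_1$ and $v_2$. When \emph{both} $\zeta_j\neq0$ we have $\det R(\lambda)=\zeta_1\zeta_2\neq0$, so $B_2$ is invertible, whence $V_1\cap V_2=0$: the $v_2$-parts lie in the distinct eigenlines $L_1\neq L_2$, and applying $B_2^{-1}$ separates $\CC\,B_2\eta_1$ from $\CC\,B_2\eta_2$. Thus $\CC^{\V(W)}=V_1\oplus V_2$, proving part~(3) in this case.

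What remains is an eigenvalue $\zeta_j=0$. Then $F_0=\{w=0\}\cup\{w'=0\}$, the scalar $w$ may vanish on it, and the elimination $\xi=-w^{-1}B_2\eta$ is unavailable; so one must argue directly. Since $\det R(\lambda)=0$, exactly one of $B_1(\lambda),B_2(\lambda)$ is singular, and on the branch of $F_0$ on which the coupling matrix becomes block-triangular with that singular block in the ``free'' corner (i.e.\ $\{w=0\}$ if $B_1$ is singular, $\{w'=0\}$ if $B_2$ is singular), the null vector is confined to the vertex-slot of the singular block; it is therefore independent of $z$ and spans a single line, which one checks is independent of $V_1$. This degenerate case is the main obstacle: in contrast with Steps~1--2 and the $\zeta_j\neq0$ half of Step~3, which are essentially bookkeeping once the regrouping is in place, it demands a careful kernel analysis on the reducible locus $\{ww'=0\}$, keeping track of which block degenerates and confirming that the mode space drops to one dimension while still meeting the other component's subspace only in $0$.
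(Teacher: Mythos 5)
Your proof is correct and follows essentially the same route as the paper: the regrouping of the four vertices into the block form $\bigl[\begin{smallmatrix}B_1 & w'I\\ wI & B_2\end{smallmatrix}\bigr]$, the commuting-block determinant identity $\det\hat\Af = \det(B_1B_2 - ww'\,I)$, and in part~(3) the elimination of one block slot (the paper parametrizes null vectors as $[\phi_1, w\phi_2]$ with $\phi_1 = -B_2\phi_2$, $(B_1B_2-\zeta)\phi_2=0$, which is the same elimination you perform). The $\zeta=0$ worry you flag at the end is not a gap relative to the paper: the paper's own treatment of that case is precisely the one-line sketch you give (a $z$-independent kernel vector coming from $B_1\psi_1=0$ or $B_2\psi_2=0$), so you have in fact reproduced its argument in full.
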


\begin{proof}
The Floquet transform of the combinatorial operator $\Af(\lambda)$ corresponding to $(\G,A)$ is
\begin{equation}\label{hatAf}
  \hat\Af(\lambda,z) \,=\, I\otimes\hat\Afo(\lambda,z) + \Gf(\lambda)\,.
\end{equation}
The matrix for this operator in the basis of values of $\hat u(\lambda,z)$ at the four vertices of a fundamental domain in the order $\{(v_1,1),(v_1,2),(v_2,1),(v_2,2)\}$, is, in block form,
\begin{equation}\label{hatAf2}
  \hat\Af(\lambda,z) \,\cong\,
  \renewcommand{\arraystretch}{1.2}
\left[\!
\begin{array}{cc}
  B_1(\lambda) & w'(z,\lambda) I \\
  w(z,\lambda) I & B_2(\lambda)
\end{array}
\!\right].
\end{equation}
The determinant of this matrix is
\begin{equation}
  \det \hat\Af(\lambda,z) \;=\; \det \big( B_1(\lambda)B_2(\lambda) - ww' I\, \big),
\end{equation}
which is a quadratic polynomial of the composite variable $\zeta=ww'$ (with coefficients that are meromorphic in~$\lambda$).  This proves part (1).  That this polynomial factors into two linear factors in $\zeta$ proves part (2).

Let $\zeta$ be a nonzero root of this determinant so that $w$ and $w'$ are also nonzero, and let the nonzero vectors $\phi_1$ and $\phi_2$ in $\CC^2$ satisfy
\begin{eqnarray}
  && \left( B_1B_2 - \zeta \right)\phi_2 = 0 \\
  && \phi_1 = - B_2\phi_2 \,.
\end{eqnarray}
It is then verified that $[\phi_1,w\,\phi_2]$ is an eigenvector of the matrix in~(\ref{hatAf2}).  As $w$ runs over $\CC^*$, these vectors form the span of $[\phi_1,0]$ and $[0,\,\phi_2]$.  If $\zeta=0$, then $[\psi_1,\psi_2]$ is an eigenvector of the matrix in~(\ref{hatAf2}) where the vectors $\psi_1$ and $\psi_2$ in $\CC^2$ satisfy $B_1\psi_1=0$ and $B_2\psi_2=0$ and thus are independent of $z_1$ and $z_1$ satisfying $\zeta=0$.  This proves part (3).
\end{proof}

A more detailed description of the curves $F_\zeta$ is given by the Proposition~\ref{Fzeta} below in the case of bilayer graphene with identical potentials in the layers, that is $q_a=q_b=q_c$ (but of course $q_1$ and $q_2$ are still arbitrary).  
The following lemma aids in the proof of the proposition.  The proofs are elementary.

\begin{lemma}\label{z12}  %%%%%%%%%%%%%%%%%%%  LEMMA
Numbers\, $z_1,\,z_2\in\CC^*$ and\, $q,\,q'\in\CC$ satisfy the system
\begin{equation}
  \renewcommand{\arraystretch}{1.1}
\left\{
\begin{array}{l}
  q  \,=\,  z_1 + z_2 \\
  q' =\,  z_1^{-1} + z_2^{-1}
\end{array}
\right.
\end{equation}
if and only if
\begin{equation}
\begin{split}
   \{ z_1, z_2 \} &= \left\{ z\in\CC^* : q'z + qz^{-1} = qq' \right\} \quad \text{if }\; qq'\not=0, \\
   \{ z_1, z_2 \} &\in \left\{ \{ z, -z\} : z\in\CC^* \right\} \quad \text{if }\; qq'=0.
\end{split}
\end{equation}
Additionally, $z_1=z_2$ if and only if $qq'=4$; and, in this case, $z=q/2$.
If $qq'=0$, both $q$ and $q'$ vanish.
\end{lemma}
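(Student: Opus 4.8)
The plan is to collapse the pair of symmetric equations into a single quadratic in $z$ and then read off the stated characterization from Vieta's formulas, handling the degenerate case $qq'=0$ by a separate short argument. The key algebraic observation is that $q'=z_1^{-1}+z_2^{-1}=(z_1+z_2)/(z_1z_2)=q/(z_1z_2)$, so $q$ and $q'$ are tied together by $q=q'\,z_1z_2$.

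First I would prove the forward implication. Assume $z_1,z_2\in\CC^*$ satisfy $q=z_1+z_2$ and $q'=z_1^{-1}+z_2^{-1}$. From the identity $q'=q/(z_1z_2)$ it follows that $q$ and $q'$ vanish simultaneously; this simultaneously establishes the claim "if $qq'=0$, both $q$ and $q'$ vanish," and in that case $z_1+z_2=0$, i.e.\ $\{z_1,z_2\}=\{z,-z\}$ with $z=z_1\in\CC^*$, which is exactly the $qq'=0$ clause. If instead $qq'\neq0$, then $z_1z_2=q/q'\neq0$ together with $z_1+z_2=q$ says that $z_1,z_2$ are the two roots of $t^2-qt+q/q'=0$; multiplying this by $q'/t$ (legitimate since $t\neq0$ along the root set) converts it to $q't+qt^{-1}=qq'$, exhibiting $\{z_1,z_2\}$ as precisely $\{z\in\CC^*:q'z+qz^{-1}=qq'\}$.

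For the reverse implication with $qq'\neq0$, I would start from $q'z+qz^{-1}=qq'$, multiply by $z/q'$ to get the genuine quadratic $z^2-qz+q/q'=0$, whose two roots $z_1,z_2$ are automatically in $\CC^*$ because their product $q/q'$ is nonzero; Vieta's formulas then give $z_1+z_2=q$ and $z_1z_2=q/q'$, hence $z_1^{-1}+z_2^{-1}=(z_1+z_2)/(z_1z_2)=q'$. The reverse implication when $qq'=0$ is immediate: $\{z_1,z_2\}=\{z,-z\}$ forces $q=0$ and $q'=0$. Finally, the coincidence statement follows from the discriminant $q^2-4q/q'$ of $t^2-qt+q/q'$: since $q\neq0$ in the nondegenerate regime, it vanishes iff $q=4/q'$, i.e.\ $qq'=4$, and then $z_1=z_2=q/2$ from $z_1+z_2=q$; in the degenerate regime $z_1=z_2$ would force $z_1=-z_1$, impossible in $\CC^*$, consistent with $qq'=0\neq4$.

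I do not expect a real obstacle here—the whole content is the one-line reduction to a quadratic, as the paper's remark that "the proofs are elementary" anticipates. The only point requiring genuine care is the bookkeeping around the $\CC^*$ constraint: checking that clearing the $z^{-1}$ term introduces no spurious root at $z=0$, that the quadratic does not secretly degenerate (it does not, once $q'\neq0$), and that the two boundary cases $q=0$ and $q'=0$ are handled uniformly, all of which are settled at once by the relation $q=q'\,z_1z_2$.
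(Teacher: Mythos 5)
Your proof is correct, and since the paper merely remarks that ``the proofs are elementary'' without supplying one, there is no authorial argument to compare against; your reduction via $q = q'\,z_1 z_2$ to the quadratic $t^2 - qt + q/q' = 0$ is precisely the elementary argument the paper has in mind, and the Vieta bookkeeping, the $\CC^*$ check on the roots (their product $q/q'$ is nonzero), and the discriminant computation for $z_1 = z_2 \iff qq' = 4$ are all handled cleanly.

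One small remark, more about the lemma's phrasing than your argument: as literally stated, the reverse implication in the case $qq' = 0$ is not quite an implication, since $qq' = 0$ together with $\{z_1,z_2\} = \{z,-z\}$ does not by itself force both $q$ and $q'$ to vanish (one could have $q=0$, $q'\neq0$, in which case the second equation of the system fails). You notice this implicitly when you write that $\{z,-z\}$ ``forces $q=0$ and $q'=0$''---what it really forces is $z_1+z_2 = 0$ and $z_1^{-1}+z_2^{-1}=0$, so the system holds only if additionally $q=q'=0$. The lemma's final sentence, ``if $qq'=0$, both $q$ and $q'$ vanish,'' is a consequence of the forward direction (via your identity $q = q'z_1z_2$) and should be read as part of the package in the degenerate case; once one folds that in, your reverse direction is immediate, as you say. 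This is a hairline imprecision in the statement rather than a gap in your proof.
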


\begin{proposition}\label{Fzeta}
\begin{eqnarray}
  F_\zeta &=& \displaystyle\bigcup_{w\in\CC\setminus\{0,1,\zeta\}} 
  \left\{ (z_1,z_2)\in(\CC^*)^2 : \left\{ z_1,z_2 \right\} = \big\{ z : \textstyle\frac{1}{w-1}z + \textstyle\frac{w}{\zeta-w}z^{-1} = 1 \big\} \right\}, \quad \text{for } \zeta\not\in\left\{ 0,1 \right\} \\
  F_1 &=& \displaystyle\bigcup_{w\in\CC\setminus\{0,1\}} 
  \left\{ (z_1,z_2)\in(\CC^*)^2 : \left\{ z_1,z_2 \right\} = \big\{ z : \textstyle\frac{1}{w-1}z + \textstyle\frac{w}{\zeta-w}z^{-1} = 1 \big\} \right\}
  \cup \left\{ (z,-z) : z\in\CC^* \right\} \\
  F_0 &=& \left\{ (z_1,z_2) : 1+z_1+z_2 = 0 \text{ or } 1+z_1^{-1}+z_2^{-1} = 0 \right\}
\end{eqnarray}
The intersection of the two surfaces whose union forms $F_0$ consists of the two points $(z_1,z_2)$ of the set $\left\{ (e^{i\pi/3},e^{-i\pi/3}), (e^{-i\pi/3},e^{i\pi/3}) \right\}$\,.
\end{proposition}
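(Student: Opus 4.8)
The plan is to reduce the whole statement to describing the level sets of a single rational function on $(\CC^*)^2$. When $q_a=q_b=q_c$, the off‑diagonal entries of $\hat\Afo(\lambda,z)$ become $w(\lambda,z)=s_a(\lambda)^{-1}(1+z_1+z_2)$ and $w'(\lambda,z)=s_a(\lambda)^{-1}(1+z_1^{-1}+z_2^{-1})$, so, absorbing the nonzero constant $s_a(\lambda)^{-2}$ into $\zeta$, the defining equation $ww'=\zeta$ of $F_\zeta$ reads $g(z_1,z_2)=\zeta$ with $g(z_1,z_2):=(1+z_1+z_2)(1+z_1^{-1}+z_2^{-1})$; it therefore suffices to describe $g^{-1}(\zeta)\cap(\CC^*)^2$ for each $\zeta\in\CC$. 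I would work in the coordinates $q:=z_1+z_2$ and $q':=z_1^{-1}+z_2^{-1}$, which satisfy $q'=q/(z_1z_2)$; then, whenever $q'\neq0$, the unordered pair $\{z_1,z_2\}$ is exactly the root set of $t^2-qt+q/q'=0$, i.e.\ the set $\{z:q'z+qz^{-1}=qq'\}$ of the $qq'\neq0$ case of Lemma~\ref{z12}.

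For $\zeta\notin\{0,1\}$ the idea is to parametrize $g^{-1}(\zeta)$ by $w:=1+z_1+z_2=1+q$. The crucial preliminary point is that $z_1=-z_2$—equivalently $q=0$, equivalently $q'=0$—cannot occur on $g^{-1}(\zeta)$ for such $\zeta$, since it forces $g=1$; consequently $q\neq0$ and $q'\neq0$ throughout $g^{-1}(\zeta)$ (so $w=1+q\neq1$), and moreover $w\neq0$ (because $g=\zeta\neq0$) and $w\neq\zeta$ (because $w=\zeta$ together with $g=\zeta$ would force $q'=0$), so $w$ takes values in $\CC\setminus\{0,1,\zeta\}$. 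Solving $g=\zeta$ for $q'$ gives $1+q'=\zeta/w$, hence $q/q'=w(w-1)/(\zeta-w)$, so $\{z_1,z_2\}$ is the root set of $t^2-(w-1)t+w(w-1)/(\zeta-w)=0$; dividing by $t$ and then by $w-1$ rewrites this as $\tfrac{1}{w-1}z+\tfrac{w}{\zeta-w}z^{-1}=1$, the relation in the statement. Conversely, for any $w\in\CC\setminus\{0,1,\zeta\}$ the two roots of that quadratic are nonzero (their product $w(w-1)/(\zeta-w)$ does not vanish) and satisfy $1+z_1+z_2=w$ and $1+z_1^{-1}+z_2^{-1}=1+\tfrac{z_1+z_2}{z_1z_2}=\zeta/w$, whence $g(z_1,z_2)=\zeta$; since $w$ is recovered from $(z_1,z_2)$ as $1+z_1+z_2$, the union over $w\in\CC\setminus\{0,1,\zeta\}$ is exactly $g^{-1}(\zeta)$, which is the first displayed formula for $F_\zeta$.

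For $\zeta=1$ the only change is that $z_1=-z_2$ now \emph{does} lie on $g^{-1}(1)$ (there $q=q'=0$, so $g=1$); by the $qq'=0$ clause of Lemma~\ref{z12} this locus is precisely $\{(z,-z):z\in\CC^*\}$, which accounts for the extra component of $F_1$, and on its complement the argument above applies verbatim with $\zeta=1$ and $w$ ranging over $\CC\setminus\{0,1\}$. For $\zeta=0$ the description is immediate: $g=0$ holds iff one of its two factors vanishes, giving $F_0=\{1+z_1+z_2=0\}\cup\{1+z_1^{-1}+z_2^{-1}=0\}$. To find the intersection of these two surfaces, substitute $z_2=-1-z_1$ (valid since $z_1\neq-1$ on the first surface) into $1+z_1^{-1}+z_2^{-1}=0$ and clear denominators; $z_1$ then satisfies a monic quadratic whose two roots form a conjugate pair on the unit circle, and pairing each with the corresponding $z_2=-1-z_1$ produces the two points of the asserted set, which are interchanged by $(z_1,z_2)\mapsto(z_1^{-1},z_2^{-1})$.

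Every step is elementary algebra, and I expect the only real obstacle to be the bookkeeping of excluded parameter values—above all, the verification that for $\zeta\notin\{0,1\}$ the degenerate locus $z_1=-z_2$ is disjoint from $F_\zeta$. That dichotomy is exactly what makes the parameter $w=1+z_1+z_2$ sweep out all of $\CC\setminus\{0,1,\zeta\}$ and keeps the component $\{(z,-z)\}$ out of the $\zeta\neq1$ formulas, so it is the step I would present in full detail; the remaining cross‑checks (nonvanishing of products, the evaluation of $1+z_1^{-1}+z_2^{-1}$ on the parametrized pairs) are routine.
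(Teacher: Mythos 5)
Your approach is exactly the intended one: the paper flags these proofs as ``elementary'' and supplies Lemma~\ref{z12} precisely so that one rewrites $g(z_1,z_2)=(1+z_1+z_2)(1+z_1^{-1}+z_2^{-1})=\zeta$ in the coordinates $q=z_1+z_2$, $q'=z_1^{-1}+z_2^{-1}$ and parametrizes by $w=1+q$. Your bookkeeping of the excluded values $w\in\{0,1,\zeta\}$, the observation that $q=0\Leftrightarrow g=1$ (which both keeps $\{(z,-z)\}$ out of $F_\zeta$ for $\zeta\neq1$ and supplies the extra component of $F_1$), and the both-directions verification are all correct.

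There is, however, one concrete problem at the end, and it is a place where trusting the paper instead of finishing the arithmetic led you astray. Substituting $z_2=-1-z_1$ into $1+z_1^{-1}+z_2^{-1}=0$ and clearing denominators gives $z_1^2+z_1+1=0$, whose roots are $z_1=e^{\pm 2\pi i/3}=\tfrac{-1\pm i\sqrt3}{2}$, and then $z_2=-1-z_1=e^{\mp 2\pi i/3}$. These are \emph{not} the points $(e^{\pm i\pi/3},e^{\mp i\pi/3})$ stated in the proposition: for $(e^{i\pi/3},e^{-i\pi/3})$ one has $1+z_1+z_2=1+2\cos(\pi/3)=2\neq0$, so that point lies on neither surface. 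The proposition as printed contains a typo ($e^{\pm i\pi/3}$ should read $e^{\pm 2\pi i/3}$), and your proof, carried out fully, establishes the corrected statement. You should finish the computation and say so, rather than asserting that your quadratic ``produces the two points of the asserted set.''
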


\section{Coupling with different asymmetry classes: irreducible case}\label{sec:asymmetric3}

The reducibility of the Floquet surface of bilayer graphene regardless of the asymmetry classes of the connecting edges is not typical for bilayer quantum graphs.  It is the bipartite property of graphene that is responsible.
Although irreducibility is generically expected, proving it can be involved.  This section presents a simple example in which a proof is not too long.

\begin{figure}[ht]
\centerline{
\scalebox{0.25}{\includegraphics{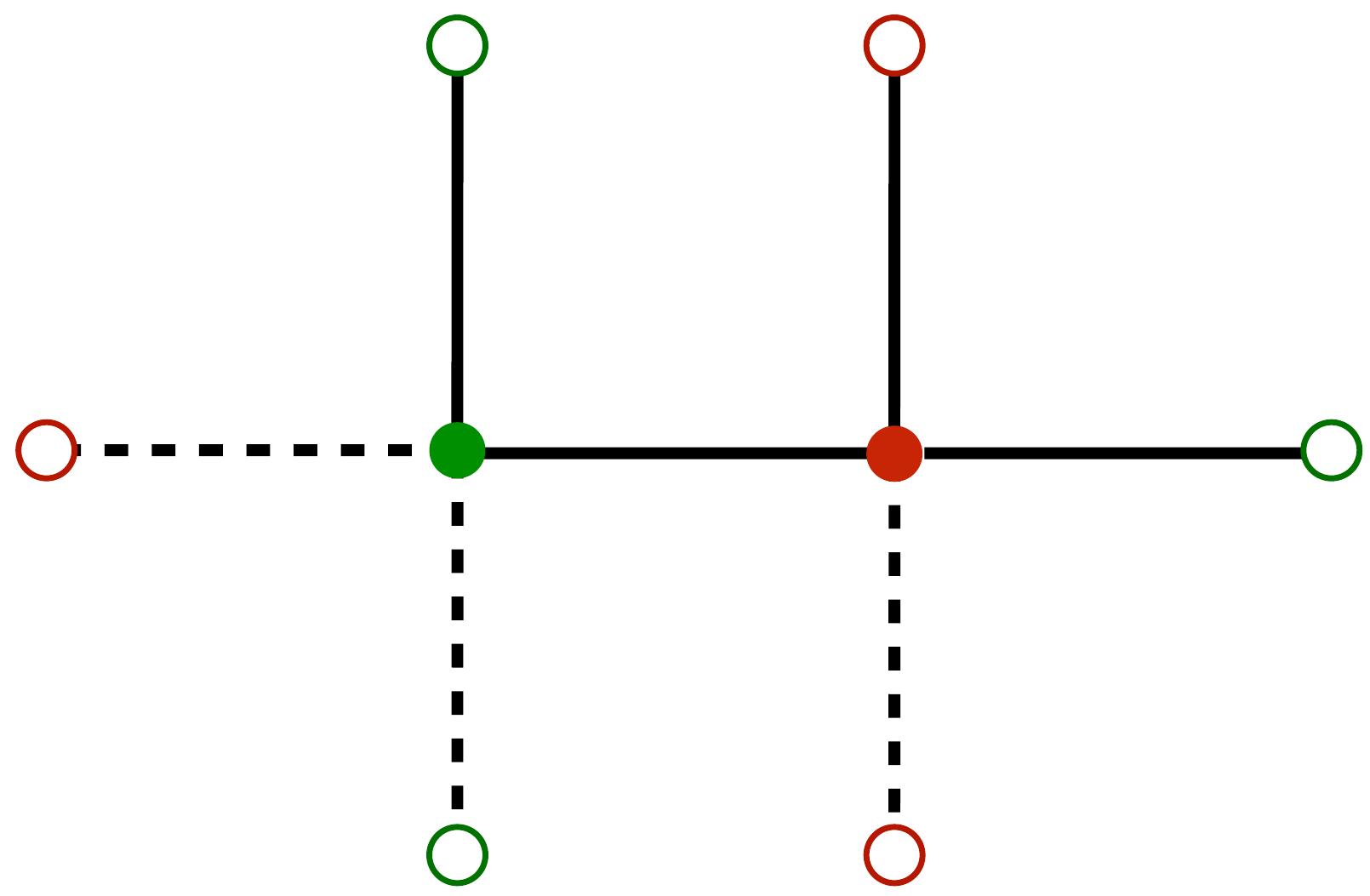}}
}
\caption{\small One fundamental domain of the single layer associated with the bilayer periodic graph constructed in section~\ref{sec:asymmetric3}.  The solid vertices and edges belong to the fundamental domain.  Vertices of the same color belong to the same $\ZZ^2$ orbit.}
\label{fig:TwoSquares}
\end{figure}

Consider a bilayer quantum graph $(\G,A)$ associated with the square periodic graph $(\Go,\Ao)$ illustrated in Fig.~\ref{fig:TwoSquares}.  A fundamental domain of $\Go$ contains two adjacent fundamental domains of a perfect square lattice, and thus contains two vertices $v_1$ and $v_2$.  In $(\G,A)$, corresponding vertices on the two copies of $\Go$ are connected by potentials with A-functions $a_1(\lambda)$ and $a_2(\lambda)$.
The Floquet transform of the combinatorial operator $\Af(\lambda)$ corresponding to $(\G,A)$ is
$
\hat\Af(\lambda,z) \,=\, I\otimes\hat\Afo(\lambda,z) + \Gf(\lambda)\,.
$
Its block form with respect to the resolution $I = P_{\mu_1}+P_{\!-\mu_1}$ by the eigen-projections for vertex $v_1$ is
\begin{equation}\label{sAhat2}
  \so(\lambda)\, \hat\Afo(\lambda,z) \,\cong\,
  \renewcommand{\arraystretch}{1.5}
\left[
  \begin{array}{cc|cc}
    \Dfcn\op (\lambda)+\zeta_2 & \xi' & 0 & 0 \\
    \xi & \Dfcn\tp (\lambda)+\zeta_2 & 0 & r(\lambda) \\
    \hline
    0 & 0 & \Dfcn\om (\lambda)+\zeta_2 & \xi' \\
    0 & r(\lambda) & \xi & \Dfcn\tm (\lambda)+\zeta_2
  \end{array}
\right]\,,
\end{equation}
in which (assuming for simplicity that the potentials of $(\Go,\Ao)$ are symmetric so that $\cto(\lambda)=\co(\lambda)$)
\begin{equation}
\begin{split}
  \Dfcn_j^\pm(\lambda) &=\, -4\co(\lambda) - \alpha\so(\lambda) \,+\, \so(\lambda) s_j(\lambda)^{-1}\!\left( -b_j(\lambda) \pm \nu_j \right),     \quad \text{for } j=1,2 \\
  r(\lambda) &=\, 0 \quad \text{if and only if } a_1(\lambda) = a_2(\lambda)\,,
\end{split}
\end{equation}
and and $\nu_1=\mu_1$.
The dependence on $z_1$ and $z_2$ comes through
\begin{equation}
\renewcommand{\arraystretch}{1.2}
\left.
\begin{array}{ll}
  \xi = 1+z_1\,, & \xi' = 1+z_1^{-1}, \\
  \zeta_1 = z_1 + z_1^{-1}, & \zeta_2 = z_2 + z_2^{-1}, \\
  w_1w_1' = 2 + \zeta_1\,.
\end{array}
\right.\end{equation}

\begin{proposition}
  The Floquet surface of the periodic bilayer quantum graph $(\G,A)$ described above is reducible at energy $\lambda$ if and only if 
\begin{equation}
  r^2(\lambda) \left( \Dfcn\op (\lambda) - \Dfcn\om  (\lambda)\right)^2 \left( r^2(\lambda) - \left( \Dfcn\op (\lambda) - \Dfcn\tm (\lambda) \right)\left( \Dfcn\tp (\lambda) - \Dfcn\om (\lambda) \right) \right) \,=\, 0\,.
\end{equation}
\end{proposition}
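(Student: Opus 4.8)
Here is how I would proceed.

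\medskip
\noindent\emph{Overview.} The plan is to compute the dispersion polynomial $D(\lambda,z)=\det\hat\Af(\lambda,z)$ from the block form (\ref{sAhat2}), recognize it as a polynomial in the two composite variables $\eta:=2+\zeta_1=z_1+2+z_1^{-1}$ and $\zeta_2=z_2+z_2^{-1}$ that is quadratic in $\eta$, reduce reducibility of $\Phi_\lambda$ to a factorization criterion for a self-reciprocal quartic in $z_1$ with coefficients in $\CC[z_2^{\pm1}]$, and finally match the resulting condition to the displayed product.

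\medskip
\noindent\emph{Computing $D$.} Reorder the four vertices of a fundamental domain as $\{(v_1,+),(v_1,-),(v_2,+),(v_2,-)\}$, the $\pm$ labelling the eigenprojections of $G_{v_1}(\lambda)$. Then $\so(\lambda)\hat\Af(\lambda,z)$ has the block form $\bigl[\begin{smallmatrix}\Df_1&\xi'I_2\\ \xi I_2&\Df_2\end{smallmatrix}\bigr]$ with $\Df_1=\mathrm{diag}(\Dfcn\op{+}\zeta_2,\ \Dfcn\om{+}\zeta_2)$ \emph{diagonal} (since $G_{v_1}$ is diagonalized in this basis) and $\Df_2=\bigl[\begin{smallmatrix}\Dfcn\tp{+}\zeta_2&r\\ r&\Dfcn\tm{+}\zeta_2\end{smallmatrix}\bigr]$. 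As $\xi I_2$ commutes with $\Df_2$ and $\xi\xi'=(1{+}z_1)(1{+}z_1^{-1})=2+\zeta_1$,
\begin{equation*}
  D(\lambda,z)=\det\bigl(\Df_1\Df_2-\eta\, I_2\bigr)=\eta^2-\tau(\zeta_2)\,\eta+\delta(\zeta_2),
\end{equation*}
with $\tau=\mathrm{tr}(\Df_1\Df_2)$ and $\delta=\det(\Df_1\Df_2)=(\Dfcn\op{+}\zeta_2)(\Dfcn\om{+}\zeta_2)\bigl[(\Dfcn\tp{+}\zeta_2)(\Dfcn\tm{+}\zeta_2)-r^2\bigr]$, a monic quartic in $\zeta_2$; in particular $\delta\not\equiv0$.

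\medskip
\noindent\emph{Reduction to a factorization criterion.} Multiplying by $z_1^2$ turns $D$ into a self-reciprocal quartic $P(z_1)=z_1^2D=z_1^2\,Q(z_1{+}z_1^{-1})$ over $\CC[z_2^{\pm1}]$, with $Q(w)=w^2+(4{-}\tau)w+(4{-}2\tau{+}\delta)$, and $\Phi_\lambda$ is reducible iff $P$ factors there. Gauss's lemma together with the self-reciprocal structure shows $P$ can split only (i) into two self-reciprocal quadratics $z_1^2-w_\pm z_1+1$, which happens exactly when $Q$ splits, i.e.\ when the $\eta$-discriminant $\Delta(\zeta_2):=\tau^2-4\delta=\mathrm{disc}_w Q$ is a perfect square; or (ii) into a reciprocal pair $g(z_1)\,z_1^{-\deg g}g(z_1^{-1})$ with $g$ not self-reciprocal, which (with $Q$ then irreducible) happens exactly when $D|_{z_1=-1}\cdot D|_{z_1=1}=\delta\,(\delta-4\tau+16)$ is a perfect square; the anti-self-reciprocal alternative would force $D=\mathrm{const}\cdot(z_1^2-1)^2$, which $\delta\not\equiv0$ excludes. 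In each case one must also check the factors are genuine Laurent polynomials with nonempty zero set and that the two components are distinct. The two transparent sufficient cases are $r=0$ — equivalently $a_1(\lambda)=a_2(\lambda)$, which makes $\Df_2$ diagonal and $D$ a product of linear factors in $\eta$ (the mechanism of Theorem~\ref{thm:asymmetric1}) — and $\Dfcn\op=\Dfcn\om$, which makes $\Df_1$ scalar so that the eigenvalues of $\Df_2$ are affine in $\zeta_2$ (their discriminant $(\Dfcn\tp-\Dfcn\tm)^2+4r^2$ being $\zeta_2$-independent); the third case of the Proposition is produced by a factorization of type~(ii).

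\medskip
\noindent\emph{Identifying the condition.} A short computation gives $\Delta=(pq-p'q')^2+4\,pp'\,r^2$ with $p,p',q,q'=\zeta_2+\Dfcn\op,\ \zeta_2+\Dfcn\om,\ \zeta_2+\Dfcn\tp,\ \zeta_2+\Dfcn\tm$; the $\zeta_2^4$ and $\zeta_2^3$ terms cancel, so $\Delta$ is quadratic in $\zeta_2$ and is a perfect square iff $\mathrm{disc}_{\zeta_2}\Delta=0$. Expanding, using $(\Dfcn\op{+}\Dfcn\om)^2-4\Dfcn\op\Dfcn\om=(\Dfcn\op-\Dfcn\om)^2$ and factoring the cross term $(\Dfcn\op\Dfcn\tp-\Dfcn\om\Dfcn\tm)$-part, one finds $\mathrm{disc}_{\zeta_2}\Delta$ to be a nonzero numerical multiple of $r^2(\Dfcn\op-\Dfcn\om)^2\bigl(r^2-(\Dfcn\op-\Dfcn\tm)(\Dfcn\tp-\Dfcn\om)\bigr)$, and — after inserting the explicit $\Dfcn_j^\pm$ and $r$ furnished by the spectral resolutions of the $G_{v_j}(\lambda)$ — the criterion (ii) is shown to contribute nothing beyond this locus; hence $\Phi_\lambda$ is reducible iff that product vanishes. \emph{The main obstacle} is exactly this last step together with the descent in the previous paragraph: proving rigorously that (i) and (ii) exhaust the factorization patterns of the self-reciprocal quartic, and then carrying out the algebra — tracking signs and the structural relations among the $\Dfcn_j^\pm$ and $r$ — that collapses both criteria to the single displayed equation.
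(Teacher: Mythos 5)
Your computation of $D(\lambda,z)$, the reordering of the four vertices to the $2\times2$ block form with diagonal block $\Df_1$ and off-diagonal blocks $\xi'I_2$, $\xi I_2$, the determinant identity $D=\det(\Df_1\Df_2-\eta I_2)=\eta^2-\tau\eta+\delta$, and the identification $\Delta=\tau^2-4\delta=(p^+-p^-)^2+4r^2q$ (the paper's $\zeta_1$-discriminant $D_1(\zeta_2)$) are all correct and are exactly the quantities the paper works with. The final step---that $\Delta(\zeta_2)$ is quadratic in $\zeta_2$, hence a perfect square iff its $\zeta_2$-discriminant vanishes, which after expansion is the displayed product---is the paper's route as well. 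Where you differ from the paper is in organizing the factorization cases: the paper writes the putative factorization as $D=(z_1+\gamma(z_2)+\beta z_1^{-1})(z_1+\delta(z_2)+\beta^{-1}z_1^{-1})$ for Laurent polynomials $\gamma,\delta$ and a unit $\beta$, proves from degree counting that $\gamma$ and $\delta$ must be polynomials in $\zeta_2$ alone (the key step being the vanishing of the $z_2-z_2^{-1}$ part of $\gamma\delta$), and then disposes of the $(1,3)$-split of the $z_1$-Newton polygon as a separate, untenable case.

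The proposal has a genuine error in your case (ii). Forcing a reciprocal-pair factorization of $P=z_1^2D$, say $P=\bigl(z_1^2+az_1+b\bigr)\bigl(z_1^2+\tfrac{a}{b}z_1+\tfrac{1}{b}\bigr)$ with $a,b\in\CC(\zeta_2)$ and $b\neq 0,1$, and eliminating $a$, gives a palindromic quartic in $b$ whose resolvent quadratic in $v=b+b^{-1}$ has discriminant equal to $\delta(\delta-4\tau+16)$. That this is a square over $\CC(\zeta_2)$ yields $v\in\CC(\zeta_2)$, but you then additionally need $v^2-4$ to be a square to recover $b\in\CC(\zeta_2)$. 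So the criterion you announce for (ii) --- ``happens exactly when $\delta(\delta-4\tau+16)$ is a perfect square'' --- is only necessary, not an equivalence. Compounding this, the assertion that ``criterion (ii) is shown to contribute nothing beyond this locus'' is stated with no argument; establishing it is precisely the content the paper's degree analysis of $\gamma(z_2)$ and $\delta(z_2)$ is designed to supply, and it is the crux of the ``only if'' direction of the proposition. You flag this honestly as the main obstacle, but as written the reciprocal-pair case is not merely unfinished: its characterization is wrong, and without repairing both the criterion and the missing comparison to case (i) the argument does not close.
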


As expected, this meromorphic function of $\lambda$ vanishes identically when $r(\lambda)\equiv0$, which occurs exactly when the A-functions $a_1(\lambda)$ and $a_2(\lambda)$ for the potentials of the two edges are identical.

\begin{proof}
The determinant of (\ref{sAhat2}) turns out to be a function of $\zeta_1$ and $\zeta_2$, which means that it is invariant under $z_1\mapsto z_1^{-1}$ and $z_2\mapsto z_2^{-1}$,
\begin{equation}
  D(\lambda;z_1,z_2) \,=\,
  \zeta_1^2 + \zeta_1 \left( p^+(\lambda,\zeta_2) + p^-(\lambda,\zeta_2) \right)
                   + p^+(\lambda,\zeta_2) p^-(\lambda,\zeta_2) - r(\lambda)^2 q(\lambda,\zeta_2)\,,
\end{equation}
in which
\begin{eqnarray}
  p^+(\zeta_2) &=& 2 - (\Dfcn\op +\zeta_2)(\Dfcn\tp +\zeta_2) \\
  p^-(\zeta_2) &=& 2 - (\Dfcn\om +\zeta_2)(\Dfcn\tm +\zeta_2) \\
  q(\zeta_2) &=& \left( \Dfcn\op  + \zeta_2 \right) \left( \Dfcn\om  + \zeta_2 \right)\,.
\end{eqnarray}
Suppress the dependence on $\lambda$ and assume that $D$ has the factorization
\begin{equation}
\begin{split}
  D(z_1,z_2) &=
  \left( z_1 + \gamma(z_2) + \beta z_1^{-1} \right)
  \left( z_1 + \delta(z_2) + \beta^{-1} z_1^{-1} \right), \\
  &= z_1^2+z_1^{-2} + z_1(\gamma(z_2)+\delta(z_2)) + z_1^{-1} (\beta^{-1}\gamma(z_2) + \beta\delta(z_2))
        + \beta + \beta^{-1} + \gamma(z_2)\delta(z_2)
\end{split}
\end{equation}
in which $\gamma(z_2)$ and $\delta(z_2)$ are Laurent polynomials and $\beta$, as well as the coefficients of $\gamma(z_2)$ and $\delta(z_2)$ depend on~$\lambda$.  This leads to the equations
\begin{equation}
\begin{split}
  \gamma(z_2) + \delta(z_2) = \beta^{-1}\gamma(z_2) + \beta\delta(z_2) &= p^+(z_2) + p^-(z_2) \\
  \beta + \beta^{-1} + \gamma(z_2)\delta(z_2) &= 2 + p^+(z_2)p^-(z_2) - r^2 q(z_2)\,.
\end{split}
\end{equation}
Because of this, $\gamma(z_2)$ and $\delta(z_2)$ must have the form
\begin{equation}
\begin{split}
  -\gamma(z_2) &\,=\, z_2^2 + \gamma_1z_2 + \gamma_0 + \gamma_{-1}z_2^{-1} + z_2^{-2} \\
  -\delta(z_2) &\,=\, z_2^2 + \delta_1z_2 + \delta_0 + \delta_{-1}z_2^{-1} + z_2^{-2} \,.
\end{split}
\end{equation}

Now it is shown that $\gamma(z)$ and $\delta(z)$ (putting $z_2=z$ for now) are functions of $\zeta=z+z^{-1}$, which is to say that $\gamma_1=\gamma_{-1}$ and $\delta_1=\delta_{-1}$.  They can be written in terms of $\zeta$ and $\zeta^-=z-z^{-1}$,
\begin{equation}
\begin{split}
  -\gamma(z) &\,=\, \zeta^2 + \gamma_+\zeta + \gamma_-\zeta^- + \tilde\gamma_0 \\
  -\delta(z) &\,=\, \zeta^2 + \delta_+\zeta + \delta_-\zeta^- + \tilde\delta_0\,,
\end{split}
\end{equation}
in which $\tilde\gamma_0=\gamma_0-2$ and $\tilde\delta_0=\delta_0-2$.
Their product is
\begin{equation}
\begin{split}
  \gamma(z)\delta(z) &\;=\;
  \zeta^4 + \zeta^3(\gamma_++\delta_+) + \zeta^2(\gamma_+\delta_+ + \gamma_-\delta_- +\tilde\gamma_0+\tilde\delta_0)
  + \zeta(\gamma_+\tilde\delta_0+\delta_+\tilde\gamma_0) + \tilde\gamma_0\tilde\delta_0 - 4\gamma_-\delta_- \\
  &\;+\; \zeta^- \left( \gamma_-(\zeta^2+\delta_+\zeta+\tilde\delta_0) + \delta_-(\zeta^2+\gamma_+\zeta+\tilde\gamma_0) \right),
\end{split}
\end{equation}
and it is a function of $\zeta$.  Thus the latter expression multiplying $\zeta^-$ must vanish identically.  This is equivalent to the vanishing of three quantities:
\begin{eqnarray}
  && \gamma_- + \delta_- \;=\; 0\,, \\
  && \gamma_- \delta_+ + \delta_-\gamma_+ \;=\; 0\,, \\
  && \gamma_-\delta_0 + \delta_-\gamma_0 \:=\: 0\,.
\end{eqnarray}
As long as not all of $\delta_+$, $\gamma_+$, $\delta_0$, and $\gamma_0$ are zero, one obtains $\gamma_-=0$ and $\delta_-=0$.  This makes $\gamma(z)$ and $\delta(z)$ polynomial functions of $\zeta$.

This yields the factorization
\begin{equation}
\begin{split}
  D(z_1,z_2)  &\;=\; \zeta_2^2 + \zeta_1\left( p^+(\zeta_2) + p^-(\zeta_2) \right)
                               + p^+(\zeta_2) p^-(\zeta_2) - r^2 q(\zeta_2) \\
                      &\;=\; \big( \zeta_1 + \gamma(\zeta_2) \big) \big( \zeta_1 + \delta(\zeta_2) \big).
\end{split}
\end{equation}
So the roots of $D$ as a function of $\zeta_1$ are polynomials in $\zeta_2$.  This means that the $\zeta_1$-discriminant
\begin{equation}
  D_1(\zeta_2) \;=\; \big( p^+(\zeta_2) - p^-(\zeta_2) \big)^2 \,+\, {\color{black}4\,} r^2\, q(\zeta_2)
\end{equation}
is the square of a polynomial.  This, in turn, means that the $\zeta_2$-discriminant
\begin{equation}
  D_2 \;=\; \left[ 2cd + 4r^2(\Dfcn\op  + \Dfcn\om ) \right]^2 - 4\left( c^2 - 4r^2 \right) \left( d^2 - 4r^2\Dfcn\op  \Dfcn\om  \right)
\end{equation}
vanishes.  A page of calculations yields
\begin{equation}
  D_2 \;=\;  16\,r^2 \left( \Dfcn\op  - \Dfcn\om  \right)^2 \left( r^2 - \left( \Dfcn\op  - \Dfcn\tm  \right)\left( \Dfcn\tp  - \Dfcn\om  \right) \right)\,.
\end{equation}
Recall now that all the quantities here are meromorphic functions of $\lambda$.

The other possible factorization of $D$ is
\begin{equation}
  D(z_1,z_2) \;=\; \left( 1 + \beta z_1^{-1} \right)
          \left( z_1^2 + \gamma(z_2) z_1 + \delta(z_2) + \beta^{-1} z_1^{-1} \right),
\end{equation}
which leads to the equations
\begin{equation}
\begin{split}
  \gamma(z_2) + \beta = \beta\delta(z_2) + \beta^{-1} &= p^+(z_2) + p^-(z_2) \\
  \delta(z_2) + \beta\gamma(z_2) &= 2 + p^+(z_2)p^-(z_2) - r^2 q(z_2)\,.
\end{split}
\end{equation}
The first of these implies that $\gamma$ and $\delta$ are both of the form
$-z^2 + \dots - z^{-2}$, which is untenable in view of the second of these equations.
\end{proof}

\bigskip
\bigskip

\noindent
{\bfseries\large Acknowledgment.}  
This research was supported by NSF Research Grant DMS-1411393.

\end{document}